%% file: main.tex
\documentclass[11pt,letterpaper]{article}
\usepackage{amsmath,amssymb,amsfonts,amsthm}
\usepackage{tikz}
\usepackage{dsfont}
\usepackage{enumitem}
\usepackage{bbm}
\usepackage{xcolor}
\usepackage[ruled,vlined,linesnumbered]{algorithm2e}
\usepackage[margin=1in]{geometry}
\usepackage[colorlinks, citecolor = blue, linkcolor = magenta]{hyperref}
\usepackage[nameinlink]{cleveref}
\usepackage[most]{tcolorbox}   
\tcbuselibrary{skins, breakable}
\usepackage{footnotehyper}

\numberwithin{equation}{section}

\input{macros}

\title{Testing Bipartiteness in Logarithmic Rounds}
\author{
Yumou Fei\thanks{
Massachusetts Institute of Technology, yumou415@mit.edu. 
Supported by NSF awards DMS-2022448 and CCF-2310818.
}
\and
Ronitt Rubinfeld\thanks{
Massachusetts Institute of Technology, ronitt@mit.edu. 
Supported by NSF awards DMS-2022448 and CCF-2310818.
}
}
\date{}

\begin{document}

\maketitle

\begin{abstract}
The seminal work of Goldreich and Ron (\textit{Combinatorica, 1999}) showed that bipartiteness of bounded-degree graphs can be tested using
$O(\sqrt{n\log n})$ random walks of length $O(\log^{6} n)$.
In this work, we improve their result by showing that $O(\sqrt{n})$ random walks of length $O(\log n)$ suffice.
As a corollary, we obtain an $O(\log n)$-pass, $O(\sqrt{n}\log n)$-space streaming algorithm for testing bipartiteness, whose pass complexity is optimal in light of a recent lower bound of Fei, Minzer, and Wang (\textit{arXiv, 2026}).

Our proof takes a different approach from that of Goldreich and Ron, using the semidefinite programming relaxation for Max-Cut introduced by Goemans and Williamson (\textit{J. ACM, 1995}).
\end{abstract}
\tableofcontents

\newpage 

\section{Introduction}

Property testing studies the task of deciding whether a large object satisfies a prescribed property or is far from every object satisfying that property, while inspecting only a small portion of the input. In this paper, we study graph property testing in the \emph{adjacency-list} model, formally defined below.

\begin{definition}
In the adjacency-list model for graph property testing, an algorithm accesses the input graph $G=([n],E)$ through the following two types of oracle queries.
\begin{enumerate}[label=(\arabic*)]
    \item \textit{Degree queries}: given a vertex $v\in[n]$, the oracle returns its degree in $G$, denoted $\deg_G(v)$.
    \item \textit{Local neighbor queries}: given a pair $(v,i)\in[n]\times\bZ$, the oracle returns the $i$th neighbor of $v$ in $G$ if $1\le i\le \deg_G(v)$,\footnote{The ordering of the neighbors of each vertex is arbitrary and may be chosen adversarially.} and returns a special symbol $\perp$ otherwise.
\end{enumerate}
\end{definition}

\begin{definition}\label{def:property_tester}
For $\varepsilon\in(0,1)$, an $\varepsilon$-tester for a graph property $\cP$ (in the adjacency-list model) is a randomized algorithm that, given access to an input graph $G=([n],E)$, achieves the following.
\begin{enumerate}[label=(\arabic*)]
    \item If $G\in\cP$, the tester accepts $G$ with probability at least $2/3$.
    \item If \(
        |E\triangle E'| \ge \varepsilon |E|
    \)
    for every graph $G'=([n],E')\in\cP$ on the same vertex set (in other words, if at least $\varepsilon|E|$ edge must be added or deleted in order to make $G$ satisfy $\cP$), the tester rejects $G$ with probability at least $2/3$. 
\end{enumerate}
\end{definition}
One of the most basic and influential problems in this model is testing bipartiteness. The seminal work of Goldreich and Ron~\cite{goldreich1999sublinear} gave a sublinear-time tester for bipartiteness based on random walks.\footnote{Strictly speaking, the result of~\cite{goldreich1999sublinear} applies only to bounded-degree graphs and regular graphs. The extension to general graphs is due to Kaufman, Krivelevich, and Ron~\cite{kaufman2004tight}.}
Their tester samples a collection of starting vertices, performs random walks from each of them, and rejects if it finds two walks from the same starting vertex that end at the same vertex with different parities. Quantitatively, their analysis shows that $O(\sqrt{n\log n})$ random walks of length $O(\log^{6} n)$ suffice. This result has become a cornerstone of the area, both as a canonical example of a random-walk-based tester and as a key primitive in subsequent work on testing graph properties; 
see, for example,~\cite{goldreich2011testing,
BatuFRSW13,
czumaj2010testing,kale2011expansion,nachmias2010testing,czumaj2014finding,czumaj2015testing,kumar2020random,adriaens2023testing}.

Despite the conceptual simplicity of the tester, the analysis in~\cite{goldreich1999sublinear} is quite intricate. When the input graph is a good expander, the argument is simple: Goldreich and Ron showed that $O(\sqrt{n})$ random walks of length $O(\log n)$ already suffice. The difficulty lies in handling graphs with no expansion guarantee. To do so, they prove that the graph can be decomposed into expander-like pieces,\footnote{In the analysis of~\cite{goldreich1999sublinear}, these pieces are not necessarily expanders, but they satisfy enough of the properties needed for the expander-case argument to go through.}
and then argue that only a small fraction of edges go between different pieces. This decomposition-based approach incurs losses of $\operatorname{polylog}(n)$ factors,\footnote{One source of this loss is that random walks started inside one piece may ``leak'' into other pieces.}
and makes the proof substantially more complicated than the expander case.

In this work, we give a different analysis of the Goldreich--Ron tester that avoids these losses. In particular, we show that the parameters from the expander case in fact suffice for arbitrary graphs.

\begin{theorem}[Informal]\label{thm:main_informal}
In the adjacency-list model, for any $\varepsilon\in (0,1)$ there is an $\varepsilon$-tester for bipartiteness that, given an $n$-vertex simple graph (without isolated vertices), performs
$O_{\varepsilon}(\sqrt{n})$ random walks of length $O_{\varepsilon}(\log n)$.
\end{theorem}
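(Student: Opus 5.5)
Completeness is immediate. If $G$ is bipartite, every walk from a fixed start $s$ ending at a vertex $u$ has length parity equal to the side of $u$ relative to $s$, so the tester never finds two walks from the same start that meet at the same endpoint with different parities. It therefore accepts with probability one. The work is in soundness.

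For soundness, assume $G$ is $\varepsilon$-far from bipartite, so every cut leaves at least roughly $\varepsilon|E|$ edges uncut (equivalently, monochromatic). Fix walk length $L=C\log n$ and consider lazy random walks. For a start $s$, let $p_s^{0}(u)$ and $p_s^{1}(u)$ be the probabilities that the walk ends at $u$ after an even or odd number of non-lazy steps. Taking $K=O(\sqrt n)$ walks from $s$, the expected number of colliding pairs of opposite parity is $\binom{K}{2}\cdot 2\sum_u p_s^0(u)p_s^1(u)$. By a standard second-moment argument, as in Goldreich--Ron, it suffices to show
\[
\sum_s \pi(s)\sum_u \frac{p_s^0(u)p_s^1(u)}{\pi(u)} \;\gtrsim\; \frac{1}{n}
\]
for an appropriate averaging over starts, where $\pi$ is the stationary measure proportional to degree. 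The normalization by $\pi(u)$ is needed to handle irregular degrees, and we reweight the sampling of starts accordingly.

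I would rewrite this quantity spectrally. Define the signed walk $M=P^0-P^1$ on the double cover of $G$; the opposite-parity collision mass is then $\tfrac12(\|p_s\|^2-\|p_s^0-p_s^1\|^2)$ in the $\pi^{-1}$-weighted norm. The key step is to show that when $G$ is far from bipartite, the signed operator $D^{-1}A$ with its negative sign (the walk that flips parity at each step) contracts substantially. If instead $\|(-P)^L e_s\|$ stays large for many $s$, these vectors form a near-feasible solution to the Goemans--Williamson SDP for Max-Cut with value close to $|E|$. Concretely, take $v_s=(-P)^{L/2}e_s/\|\cdot\|$. The SDP objective $\sum_{(a,b)\in E}(1-\langle v_a,v_b\rangle)/2$ is near $|E|$ exactly when $\langle v_a,v_b\rangle\approx -1$ across edges, and this is what non-decay of the signed walk provides. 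Goemans--Williamson hyperplane rounding then yields a cut leaving $O(\sqrt{\delta})|E|$ edges uncut when the SDP deficit is $\delta|E|$. Choosing $\delta$ small relative to $\varepsilon^2$ contradicts $\varepsilon$-farness. Thus $\sum_s\pi(s)\|p^0_s-p^1_s\|^2$ is at most a small fraction of $\sum_s\pi(s)\|p_s\|^2$, while $\|p_s\|^2\ge 1/n$ in the weighted norm by Cauchy--Schwarz. Together these give the displayed bound, and hence $O(\sqrt n)$ walks suffice.

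The main obstacle is the middle step: converting ``the signed walk does not decay within $O(\log n)$ steps'' into an SDP solution with small deficit. The decay must be quantified through a telescoping energy identity: $\|(-P)^t x\|^2-\|(-P)^{t+1}x\|^2$ is at least the Dirichlet form $\sum_{(a,b)}|x_a+x_b|^2$ of the signed Laplacian. Summing over $t\le L$ shows that some intermediate time $t$ has Dirichlet energy at most $1/L$ of the total mass. With $L=C\log n$ and the total mass bounded by $n$ times the minimum, this is small enough, and that intermediate time supplies the vectors fed into the SDP. Care is also needed with degree weighting and with starts whose mass concentrates. I would handle these by averaging over $s$ with weights $\pi(s)$ and applying Markov's inequality to find a constant fraction of good starts.
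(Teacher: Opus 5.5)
Your high-level plan matches the paper's: build a near-optimal Goemans--Williamson solution from walk statistics, round it, and use a telescoping potential to exploit walks of length $O(\log n)$. However, there is a genuine gap at exactly the step you flag, and the argument around it also breaks.

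\textbf{The SDP step.} With $P=D^{-1}A$ the simple walk, $(-P)^{L/2}$ differs from $P^{L/2}$ only by the global sign $(-1)^{L/2}$. So your vectors satisfy $\inp{v_a}{v_b}\ge 0$ for all $a,b$, and the SDP value they certify is at most $1/2$. Non-decay of $\|(-P)^{L}e_s\|=\|P^{L}e_s\|$ only says the walk has not mixed, which is unrelated to bipartiteness. On a disjoint union of triangles ($\tfrac13$-far from bipartite) it never decays, yet your Gram matrix has $\inp{v_a}{v_b}\approx 1$ on every edge. Your energy inequality also fails: $\|y\|_2^2-\|Py\|_2^2=\inp{y}{(I-P^2)y}$, and $1-\lambda^2<1+\lambda$ for eigenvalues $\lambda>0$. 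It holds (up to a factor $2$) only for the parity-split lazy operator $S=\tfrac12(I-P)$, which is what must be used.

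More fundamentally, even with $S$ your telescoping only gives, for each start $s$, a time at which $y=S^t\delta_s$ has small signed energy $\inp{y}{(I+P)y}$ relative to $\|y\|_2^2$. That is a local bipartition around $s$. You give no way to patch these into a global solution with $X_{ii}\le 1$ and $\mathrm{tr}(AX)\approx-2|E|$. By reversibility, the vertex-indexed form of your bound controls $\sum_{a\sim b}\|y_a+y_b\|_2^2$ only relative to $\sum_a\deg(a)\|y_a\|_2^2$, where $\|y_a\|_2^2$ can be as large as $1/\mu(a)$. That sum is dominated by vertices whose walks stay concentrated, and normalizing each $y_a$ destroys control on the remaining edges. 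This non-uniformity of $L^2$ norms is exactly the obstacle behind Goldreich--Ron's decomposition.

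The paper's fix has three parts:
\begin{itemize}
\item It uses the Hellinger embedding $f_v=\sqrt{\tfrac12Q_r\delta_v}-\sqrt{\tfrac12R_r\delta_v}$. This has $\|f_v\|_2\le 1$ for every start, and $X_{ij}=\Exu{v\sim\mu}{f_v(i)f_v(j)}$ satisfies $X_{ii}\le\Exu{v\sim\mu}{W^r\delta_v(i)}=1$ by stationarity.
\item It measures mixing by relative entropy, via $1-\inp{\sqrt g}{P\sqrt g}\le\mathrm{Ent}_\mu(g)-\mathrm{Ent}_\mu(Wg)$. This telescopes to $\log_2(1/\mu(v))$ for every start, however concentrated, and the $\mu$-average is at most $\log_2 n$.
\item A binomial-coupling lemma removes parity, relating $\inp{f_v}{Pf_v}$ to $\inp{\sqrt{W^r\delta_v}}{P\sqrt{W^r\delta_v}}$.
\end{itemize}

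\textbf{The collision step.} Your reduction here is not valid either. A lower bound on the $\pi$-average collision mass does not imply rejection. Take a bipartite graph with $O(n)$ edges plus one disjoint triangle. It has $\sum_s\pi(s)\sum_u p^0_s(u)p^1_s(u)=\Omega(1/n)$, and your $1/\pi(u)$-weighted version is even $\Omega(1)$. Yet the tester, with $O(1/\varepsilon)$ degree-biased starts, almost never touches the triangle. Your averaged bound weights starts by $\|p_s\|_2^2$, so Markov's inequality cannot turn it into an $\Omega(\varepsilon)$ $\mu$-fraction of good starts. A per-start $L^2$ bound would still need a second-moment argument. Also, the algorithm's collision event is governed by $\sum_u p^0_s(u)p^1_s(u)$, with no $1/\pi(u)$.

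The paper instead contraposes through total variation. By the $L^1$ birthday paradox, acceptance with probability at least $1/3$ forces $\Exu{v\sim\mu}{2-\|Q_m\delta_v-R_m\delta_v\|_1}\le2\varepsilon$. That is, the even and odd endpoint distributions are nearly disjoint for $\mu$-most starts, which is exactly the per-start input the SDP construction needs.

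Finally, sampling starts proportionally to degree in the adjacency-list model is itself nontrivial. The paper handles it with the Eden--Rosenbaum $O(\sqrt n)$-query near-uniform edge sampler.
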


\begin{remark}
The query complexity of the algorithm in~\Cref{thm:main_informal} is $O(\sqrt{n}\log n)$; the best known lower bound on the query complexity is $\Omega(\sqrt{n})$, due to Goldreich and Ron~\cite{goldreich2002property}.
\end{remark}

\subsection{Discussion: Max-Cut Value}\label{subsec:discussion_max_cut}

A property tester may be viewed as an algorithm that approximates the distance of an input to satisfying a property, albeit in a very coarse sense: it only distinguishes zero distance from sufficiently large distance. In the adjacency-list model, the distance of a graph to bipartiteness is directly related to its Max-Cut value.

\begin{definition}\label{def:max-cut-value}
Given a graph $G=([n],E)$,\footnote{Throughout this paper, graphs have no self-loops but may have parallel edges; equivalently, the edge set is a multiset. A \emph{simple graph} is a graph that has no parallel edges, i.e. every edge has multiplicity 1.}
the Max-Cut value of $G$ is
\begin{equation}\label{eq:def_max_cut}
    \mathrm{val}_{\mathrm{MC}}(G)
    =
    \max_{f:[n]\to\{0,1\}}
    \frac{1}{|E|}
    \sum_{\{u,v\}\in E}\ind{f(u)\neq f(v)}.
\end{equation}
Equivalently, $\mathrm{val}_{\mathrm{MC}}(G)$ is the maximum fraction of edges crossing a bipartition of the vertex set.
\end{definition}

It is easy to see from~\Cref{def:property_tester} that testing bipartiteness with proximity parameter $\varepsilon$ is the same as distinguishing graphs of Max-Cut value $1$ from graphs of Max-Cut value at most $1-\varepsilon$. We formalize this using the following standard promise-problem notation.

\begin{definition}\label{def:max-cut-gap}
Fix a completeness parameter $c\in(0,1]$ and a soundness parameter $s\in[0,c)$.
The promise problem $\mathsf{MaxCut}[c,s]$ is the following decision problem: given a graph $G$, distinguish between (1) the \textit{yes case}, where $\mathrm{val}_{\mathrm{MC}}(G)\ge c$, and (2) the \textit{no case}, where $\mathrm{val}_{\mathrm{MC}}(G)\le s$.
A randomized algorithm must accept with probability at least $2/3$ in the yes case and reject with probability at least $2/3$ in the no case.
\end{definition}

Assuming $\mathbf{P}\neq \mathbf{NP}$ and the Unique Games Conjecture, the range of parameters $c,s$ for which $\mathsf{MaxCut}[c,s]\in \mathbf{P}$ is now well understood~\cite{goemans1995improved,khot2007optimal,o2008optimal}. In particular, given a graph $G=([n],E)$, the semidefinite programming (SDP) rounding algorithm of Goemans and Williamson~\cite{goemans1995improved} outputs (in $\mathrm{poly}(n)$ time) a bipartition of $[n]$ that cuts at least
\(
    0.87856\cdot \mathrm{val}_{\mathrm{MC}}(G)
\)
fraction of the edges of $G$. Consequently, for every $c\in(0,1]$, the promise problem
\(
    \mathsf{MaxCut}[c,\,0.878\,c]
\)
is in $\mathbf{P}$. Moreover, when the completeness parameter $c$ is close to $1$, one can obtain a better soundness guarantee:

\begin{theorem}[\cite{goemans1995improved}]\label{thm:GW_in_P}
For any rational numbers $\varepsilon,\varepsilon'\in(0,1)$ such that  $\varepsilon'>\sqrt{\varepsilon}$, we have
\(
    \mathsf{MaxCut}[1-\varepsilon,1-\varepsilon']\in \mathbf{P}.
\)
\end{theorem}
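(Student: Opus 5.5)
We prove the statement by solving the Goemans--Williamson SDP relaxation and rounding it with a random hyperplane. For a graph $G=([n],E)$, consider
\[
\mathrm{SDP}(G)=\max_{x_1,\dots,x_n\in\mathbb{S}^{n-1}}\frac{1}{|E|}\sum_{\{u,v\}\in E}\frac{1-\langle x_u,x_v\rangle}{2}.
\]
Embedding a cut $f$ via $x_v=\pm e_1$ shows $\mathrm{SDP}(G)\ge \mathrm{val}_{\mathrm{MC}}(G)$. The algorithm solves the SDP to additive accuracy $\eta$ in polynomial time (ellipsoid method, with $\eta$ a suitable rational constant depending on $\varepsilon,\varepsilon'$). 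It then computes the quantity $\mathrm{SDP}(G)-\eta$, which serves as a certified lower bound on an achievable cut value, as explained below. Finally, it accepts iff this quantity exceeds a threshold $\tau$ chosen strictly between $1-\varepsilon'$ and the guaranteed yes-case value. The algorithm is in fact deterministic, since the accept decision uses only the SDP value and not the rounded cut.

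\textbf{Soundness.} We show that if $\mathrm{SDP}(G)\ge 1-\delta$, then $\mathrm{val}_{\mathrm{MC}}(G)\ge 1-h(\delta)$ for an explicit concave function $h$ with $h(\delta)\approx \tfrac{2}{\pi}\sqrt{\delta}$. To prove this, pick a uniformly random Gaussian vector $g$ and set $f(v)=\ind{\langle g,x_v\rangle\ge 0}$. An edge whose vectors make angle $\theta_{uv}$ is cut with probability $\theta_{uv}/\pi$. Write $1-\delta_{uv}=(1-\cos\theta_{uv})/2$, so that the average of $\delta_{uv}$ over $E$ is at most $\delta$. The probability that the edge is \emph{not} cut is $1-\theta_{uv}/\pi=\tfrac{1}{\pi}\arccos(1-2\delta_{uv})=:h(\delta_{uv})$.

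\textbf{Main obstacle: concavity.} The key step is to check that $h(t)=\tfrac1\pi\arccos(1-2t)$ is concave on $[0,1/2]$; it behaves like $\tfrac{2}{\pi}\sqrt t$ near $0$. If the SDP values lie in $[0,1]$ but the relevant range is small, one can truncate and instead use the concave upper bound $h(t)\le \tfrac{2}{\pi}\sqrt{t}\cdot C$, or more simply $h(t)\le \sqrt t$. The latter holds because $\arccos(1-2t)\le \pi\sqrt t$, which follows from the elementary inequality $\sin^2(\pi s/2)\ge s^2$ for $s\in[0,1]$, where $\arccos(1-2t)=2\arcsin\sqrt t$. Then Jensen's inequality applied to the concave function $\sqrt{\cdot}$ gives expected uncut fraction at most $\mathbb{E}_E\sqrt{\delta_{uv}}\le\sqrt{\delta}$. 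Hence some cut achieves $\mathrm{val}_{\mathrm{MC}}(G)\ge 1-\sqrt{\delta}$.

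\textbf{Thresholds.} In the yes case, $\mathrm{SDP}\ge 1-\varepsilon$, so the computed value is at least $1-\varepsilon-\eta$. In the no case, $\mathrm{val}_{\mathrm{MC}}\le 1-\varepsilon'$. By soundness, $1-\sqrt{\delta}\le \mathrm{val}_{\mathrm{MC}}\le 1-\varepsilon'$, so $\delta\ge\varepsilon'^2$. Thus the true SDP value is at most $1-\varepsilon'^2$, and the computed value is at most $1-\varepsilon'^2+\eta$. Since $\varepsilon'>\sqrt\varepsilon$ means $\varepsilon'^2>\varepsilon$, we can choose a rational $\eta<(\varepsilon'^2-\varepsilon)/2$ and a threshold $\tau$ strictly between $1-\varepsilon'^2+\eta$ and $1-\varepsilon-\eta$. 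Accepting iff the computed value is at least $\tau$ then separates the two cases.
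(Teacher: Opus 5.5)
Your proposal is correct and takes essentially the route the paper indicates for this cited result: approximately solve the Goemans--Williamson SDP, threshold its value, and get soundness from the rounding guarantee that SDP value at least $1-\delta$ implies Max-Cut value at least $1-\sqrt{\delta}$ (the paper's \Cref{thm:GW}, which it cites rather than proves), which you verify directly from $\frac{1}{\pi}\arccos(1-2t)\le\sqrt{t}$ on $[0,1]$ and Jensen. (Two minor points, neither of which affects the argument: the detour through concavity of $h$ on $[0,1/2]$ is unnecessary and would not suffice alone, since $h$ is convex on $[1/2,1]$; and the quantity you threshold is a lower bound on the SDP value, not on an achievable cut value.)
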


Goemans and Williamson proved~\Cref{thm:GW_in_P} by showing that if a certain semidefinite programming relaxation of the maximization problem in~\eqref{eq:def_max_cut} has value at least $1-\varepsilon$, then its solution can be rounded to an actual bipartition achieving Max-Cut value at least $1-\sqrt{\varepsilon}$. We formalize this statement in~\Cref{thm:GW}. This theorem will be a crucial tool in the proof of~\Cref{thm:main_informal}, as we now explain.

The original analysis of Goldreich and Ron proceeds by showing that, if the random-walk tester fails to find a violation with constant probability, then the graph must contain a large cut. Without expansion assumptions, however, the most direct consequence of a low rejection probability is only local: for many vertex subsets $V\subseteq[n]$, there exists a good bipartition of the subgraph induced by $V$. To patch these local bipartitions into a global cut $f:[n]\to\{0,1\}$, one needs the subsets $V$ to be disjoint. This led Goldreich and Ron to iteratively peel off such subsets after finding local bipartitions. The difficulty is that random walks started in the residual graph may enter previously peeled-off sets $V$. As a result, their analysis requires sufficiently long walks so that the walk returns to the residual graph many times.

Using~\Cref{thm:GW}, we can relax the goal: instead of directly constructing a large cut under the assumption that the random-walk tester rejects with low probability, it suffices to construct a good solution to the Goemans--Williamson SDP. The smoothness of the SDP makes it much easier to combine local information without requiring the underlying vertex subsets to be disjoint. In particular, local SDP solutions can be patched together by using the elementary fact that sums of positive semidefinite matrices remain positive semidefinite.

\begin{remark}\label{rem:not_finding_solutions}
We note that our sublinear-time algorithm itself (which is the same as in~\cite{goldreich1999sublinear}) does not compute an approximate solution to the Goemans--Williamson SDP, nor does it output an approximate Max-Cut. These approximate solutions are only shown to exist in the analysis, under the assumption that the tester rejects with low probability.
\end{remark}

\subsection{Discussion: Multi-Pass Streaming}

In the multi-pass streaming model, the edges of an input graph $G=([n],E)$ are presented as an (adversarially ordered) data stream, and the algorithm is allowed to make multiple passes over the stream while maintaining a small memory space. This sublinear-space model is closely related to the adjacency-list model for sublinear-query algorithms, as evidenced by, for example, a recent work of Fei, Minzer and Wang~\cite{fei2025dichotomy}. In particular,~\Cref{thm:main_informal} essentially implies the following result:

\begin{theorem}\label{thm:streaming}
For any $\varepsilon\in (0,1)$, there is an $O_{\varepsilon}(\log n)$-pass $O_{\varepsilon}(\sqrt{n}\log n)$-space randomized streaming algorithm for $\mathsf{MaxCut}[1,1-\varepsilon]$ on graphs with $n$ vertices and $\mathrm{poly}(n)$ edges.
\end{theorem}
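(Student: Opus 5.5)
We derive the streaming algorithm by simulating the random-walk tester of \Cref{thm:main_informal} on the stream, with one pass per step of all walks run in parallel. Note first that $\mathsf{MaxCut}[1,1-\varepsilon]$ is exactly $\varepsilon$-testing bipartiteness in the sense of \Cref{def:property_tester}. The tester samples $K=O_\varepsilon(\sqrt n)$ starting vertices, runs $K$ lazy or simple random walks of length $L=O_\varepsilon(\log n)$ (counting multiplicities of parallel edges), and rejects if two walks from a common start end at the same vertex with different parities. The analysis behind \Cref{thm:main_informal} may assume simple graphs without isolated vertices, so we first reduce the general streaming instance to that case or check that the analysis extends.

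\textbf{Simulation.} We maintain the current positions $x_1,\dots,x_K$ of all walks, each stored in $O(\log n)$ bits, for $O(\sqrt n\log n)$ space in total. To advance every walk by one step in a single pass, each walk must pick a uniformly random incident edge of its current vertex. Reservoir sampling does this with no knowledge of degrees. For each walk $j$, keep a counter $c_j$ of edges seen so far incident to $x_j$ and a candidate neighbor. When an edge $\{u,v\}$ with $u=x_j$ arrives, increment $c_j$ and replace the candidate by $v$ with probability $1/c_j$. We do this independently for every walk, including walks sitting at the same vertex, so that they remain independent. This costs $O(\log n)$ bits per walk, since $|E|=\mathrm{poly}(n)$.

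\textbf{Processing each edge.} We also need to find all walks located at an endpoint of the arriving edge. Storing the walks in a dictionary keyed by current position lets us do this without extra memory beyond $O(K\log n)$. Time is not constrained anyway. Parities are tracked by the step index. After $L$ passes we have the endpoints and can apply the rejection rule offline.

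\textbf{Isolated vertices and degenerate cases.} Starting vertices in \Cref{thm:main_informal} are uniform over $[n]$. If isolated vertices exist, a walk from an isolated vertex simply stays put, which never causes a rejection. We can therefore either follow the paper's reduction for isolated vertices or note that isolated vertices only affect the sampling distribution. One option is to use an extra first pass to sample starting points from non-isolated vertices, for instance via endpoints of uniformly random edges, if the theorem's analysis tolerates that distribution.

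\textbf{Main obstacle.} I expect the main obstacle to be precisely this mismatch: the formal statement of \Cref{thm:main_informal} concerns simple graphs with no isolated vertices, whereas a streaming input may have isolated vertices or parallel edges. The simulation itself is routine. I would handle the mismatch by invoking the formal (non-informal) version of the main theorem, which presumably allows multigraphs. Alternatively, I would show that removing isolated vertices changes nothing: if $n'$ vertices are non-isolated, sample $O(\sqrt{n})\ge O(\sqrt{n'})$ starts and discard the isolated ones, and enough good starts remain with constant probability, provided the guarantee is robust to starting from a random non-isolated vertex.
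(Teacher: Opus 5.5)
Your simulation is the paper's argument. You advance all $\lceil 4/\varepsilon'\rceil\cdot k=O(\varepsilon^{-8}\sqrt n)$ walks of $\textsc{BipTest}(G,4t,\varepsilon')$ in parallel, keep $O(\log n)$ bits per walk, and invoke \Cref{thm:main}. The paper spends two passes per step; your reservoir-sampling step correctly needs only one.

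However, the issue you single out as the main obstacle rests on a misreading, and your fallback for it would fail. The starts are not uniform over $[n]$. \Cref{line:sample_edge} of \Cref{alg:main} draws $\lceil 4/\varepsilon'\rceil$ starts from $\mu$, i.e.\ proportionally to degree; this is why \Cref{thm:main_informal} needs the Eden--Rosenbaum edge sampler. It then runs $k=\Theta(\sqrt n/\varepsilon')$ walks from each start, rather than $O(\sqrt n)$ starts with one walk each. The analysis, beginning with \Cref{cor:even_odd_clash}, only controls an average over $v\sim\mu$.

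The guarantee is genuinely not robust to uniform starts, even restricted to non-isolated vertices. Take a clique on $\sqrt n$ vertices plus a perfect matching on the remaining vertices. About half the edges lie in the clique, so the graph is roughly $\frac14$-far from bipartite. Yet a uniform start hits the clique with probability only $n^{-1/2}$. A walk started on a matching edge ends at its start after an even number of steps and at the partner after an odd number, so it never produces a parity collision. Hence $O_\varepsilon(1)$ uniform starts accept with probability $1-o(1)$. Compensating with $\Theta(\sqrt n)$ uniform starts, each with its own $k$ walks, would cost $\Theta(n\log n)$ space.

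The fix is the option you mention only in passing, and it makes the obstacle disappear. Apply \Cref{thm:main} directly: it is stated for arbitrary multigraphs (the edge set is a multiset), and isolated vertices have $\mu$-mass $0$, so they are never chosen. A start with law exactly $\mu$ is obtained in one pass by reservoir-sampling a uniform edge of the stream, counting multiplicity, and taking a uniformly random endpoint. No reduction to simple graphs and no removal of isolated vertices is needed; if the stream has no edges, just accept.

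Finally, make the parity rule precise, since every walk is advanced in the same passes. You must either give walk $i$ its own length $\ell_i\sim\mathrm{Bin}(4t,\frac12)$ and freeze it after $\ell_i$ passes, or run lazy walks of length $4t$ and record the parity of the number of non-lazy steps, as in \Cref{lem:even_odd_clash}. Simple walks of a common fixed length all have the same parity and can never trigger a rejection.
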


The $O(\log n)$ pass complexity in~\Cref{thm:streaming} is asymptotically optimal, due to the following lower bound of Fei, Minzer and Wang~\cite{fei2026near}:

\begin{theorem}[\cite{fei2026near}]
Fix an arbitrary constant $\varepsilon\in (0,1)$. Then any $o(\log n)$-pass randomized streaming algorithm for $\mathsf{MaxCut}[1,\frac{1}{2}+\varepsilon]$ on graphs with $n$ vertices and $O(n)$ edges must use at least $n^{1-o(1)}$ bits of memory.
\end{theorem}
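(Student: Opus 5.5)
This theorem is quoted from \cite{fei2026near} and is not proved in the present paper; what follows is how I would approach a proof. The plan is a distributional lower bound that pits a random bipartite sparse graph against a random sparse graph. The two distributions agree on every ``local view'' of radius $o(\log n)$, since odd cycles in the no case are typically of length $\Omega(\log n)$. I would then show that a streaming algorithm with $p$ passes and memory $n^{1-\delta}$ can effectively explore neighborhoods of radius only $O(p)$.

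\textbf{The two distributions.} Fix a large constant degree $d=d(\varepsilon)$ and write the graph as a union of $d$ random perfect matchings.
\begin{itemize}
\item In $\mathcal{D}_{\mathrm{no}}$ the matchings are uniform. A first-moment bound over all $2^n$ bipartitions gives $\mathrm{val}_{\mathrm{MC}}\le \tfrac12+O(1/\sqrt d)\le \tfrac12+\varepsilon$ with high probability.
\item In $\mathcal{D}_{\mathrm{yes}}$ we first plant a uniformly random bipartition $(A,B)$ with $|A|=|B|$. Each matching is then a uniform matching between $A$ and $B$, so $\mathrm{val}_{\mathrm{MC}}=1$.
\end{itemize}
With high probability both graphs have girth-type local structure: every ball of radius $c\log n$ is a tree, apart from a negligible number of short cycles that can be removed. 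This is the standard fact that cycles of length below $c\log n$ are rare.

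\textbf{Reduction to a multi-round communication game.} To fit streaming, I would refine the matchings into $k=\Theta(\log n)$ layers. The vertex set is split into layers $V_0,\dots,V_k$, and edges appear only between consecutive layers, glued so that the union of layers again looks like a random expander-like graph. The stream presents the layers in reverse order $E_k,E_{k-1},\dots,E_1$. An algorithm with $p$ passes and memory $s$ yields a $k$-player one-way protocol run for $p$ rounds with messages of $s$ bits. This is a pointer-chasing/Boolean-hidden-matching style game. Distinguishing the two distributions there amounts to determining the parity of a path or cycle that threads through all $k$ layers.

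\textbf{The information argument.} I would set up a hybrid argument over passes. The invariant is that after $t$ passes the memory state has $n^{-\Omega(1)}$ statistical (or KL) advantage for any parity information about paths of length greater than roughly $t$ layers. For a single layer this is a Boolean Hidden Matching / Kapralov--Khanna--Sudan Fourier estimate: an $s$-bit message about a random matching carries advantage about $(s/n)^{\Omega(1)}$ on parities of matched pairs, via hypercontractivity (KKL-type level inequalities). Iterating this bound over the layers should show that $o(\log n)$ passes cannot bridge $k$ layers, which gives the $n^{1-o(1)}$ memory bound.

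\textbf{Main obstacle.} I expect the hard part to be making this round-elimination step \emph{lossless enough}. A naive composition loses a factor polynomial in $n$, or forces memory to drop by a constant factor, in each pass. Either loss would give only $n^{1-O(\varepsilon)}$ bounds for constant pass counts, not $n^{1-o(1)}$ for $o(\log n)$ passes. My first attempt would track a potential such as the $\ell_2$ Fourier mass of the posterior on ``frontier'' parities. I would show this mass grows by at most a multiplicative $n^{o(1)}$ per pass, using the tree-like local structure to keep distinct parities nearly independent. Making that independence rigorous in the presence of the global cycles required for soundness is the delicate point.
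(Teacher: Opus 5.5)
As you note, the paper does not prove this statement. It is quoted from \cite{fei2026near} only to show that the pass count in \Cref{thm:streaming} is optimal, so there is no argument here to compare against. Judged on its own terms, your outline has two genuine gaps, and the first is a concrete flaw in the construction. Suppose the vertices are split into layers $V_0,\dots,V_k$ and every edge joins consecutive layers, even if $V_k$ is also joined to $V_0$. Color each vertex by the parity of its layer index, placing the ``seam'' at the consecutive pair carrying the fewest edges. This cuts all but a $1/(k+1)$ fraction of the edges. Your no instances therefore have Max-Cut value $1-O(1/\log n)$, not at most $\frac12+\varepsilon$. The structure that makes ``each pass crosses one layer'' true is exactly what forces near-bipartiteness, so such constructions can only give gaps of the form $[1,1-O(1/k)]$. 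Suppose instead you keep the genuinely random union of matchings that your first-moment bound requires, and merely partition its edges into stream blocks. Then the layer mechanism disappears, since one pass can follow any path whose edges arrive in stream order. You would need a different reason why a pass extends explored paths by only $O(1)$ edges, e.g.\ a random order over many small blocks. Your single-layer estimate is also miscalibrated. When a message may depend on the hidden bipartition, as in Boolean Hidden Matching, the level-2 inequality gives advantage about $s/\sqrt{n}$, and $O(\sqrt{n})$ stored coordinates already give constant advantage by the birthday paradox. Such estimates stop at $\sqrt{n}$. Reaching $n^{1-o(1)}$ needs the bipartition to be hidden from every player, with each player seeing only a small fraction of the edges.

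The second gap is the core multi-pass step: it is asserted rather than proved, and its quantitative target is inverted. A constant-factor loss per pass is not an obstacle but exactly what is needed. It gives $s\ge n/C^{p}=n^{1-O(p/\log n)}$, which is $n^{1-o(1)}$ precisely when $p=o(\log n)$, consistent with the $O(\log n)$-pass upper bound of \Cref{thm:streaming}. The bound you propose instead, growth by a factor $n^{o(1)}$ per pass, compounds to $n^{o(1)\cdot p}$, which need not be $n^{o(1)}$. For example, a per-pass factor $2^{\sqrt{\log_2 n}}$ over $p=\sqrt{\log_2 n}\,\log_2\log_2 n=o(\log n)$ passes gives $n^{\log_2\log_2 n}$. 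What is missing is a round-by-round bound on the relevant Fourier mass of the posterior of the hidden bipartition. That bound must lose only an $O(1)$ factor per pass, in a model whose no instances are genuinely random. The outline provides neither that model nor that bound.
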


\subsection{Related Work and Open Problems}

There has been some interest in obtaining tolerant testers for bipartiteness in the adjacency-list model. In the terminology used here, tolerant testing can be formulated as the gap problem
\(
    \mathsf{MaxCut}[1-\varepsilon,1-\varepsilon']
\)
for constants $0<\varepsilon<\varepsilon'<1/2$. Existing sublinear-query algorithms for such problems~\cite{peng2023sublinear,jha2024sublinear} require additional assumptions on the input graph, such as expansion or clusterability. It is therefore natural to ask whether the connection between random-walk-based testing and semidefinite programming can be used to design sublinear-time tolerant testers for general graphs.

\begin{problem}
Does there exist a constant $\varepsilon\in(0,1/4)$ such that
\(
    \mathsf{MaxCut}[1-\varepsilon,1/2+\varepsilon]
\)
admits a sublinear-query algorithm in the adjacency-list model?
\end{problem}

We note that query lower bounds of $n^{1/2+\Omega_{\varepsilon}(1)}$ are known for
$\mathsf{MaxCut}[1-\varepsilon,1/2+\varepsilon]$, due to~\cite{yoshida2011lower,chiplunkar2018testing}.

Max-Cut is a special case of a constraint satisfaction problem (CSP), and testing bipartiteness in the adjacency-list model can be viewed as a special case of testing satisfiability of CSP instances.\footnote{In the adjacency-list model for CSPs, given a query $(v,i)$, the local neighbor oracle returns the $i$th constraint containing the variable $v$.}
The general question of which CSPs admit sublinear-time satisfiability testers has received some attention~\cite{bogdanov2002lower,yoshida2010query,yoshida2011optimal,yoshida2011lower,aaronson2025property,fei2025unbounded}. In particular, recent work~\cite{fei2025unbounded} shows that testing satisfiability of any \emph{unbounded-width}\footnote{The notion of unbounded width was introduced by Feder and Vardi~\cite{feder1998computational}; we do not define it here.}
CSP requires $\Omega(n)$ queries, leaving open the case of \emph{bounded-width} CSPs. Besides Max-Cut, another prominent bounded-width CSP is $\mathsf{2SAT}$. It would be very interesting to understand whether satisfiability of $\mathsf{2SAT}$ instances can be tested in sublinear time.

\begin{problem}
Does there exist a constant $\varepsilon\in(0,1/4)$ such that
\(
    \mathsf{Max2SAT}[1,3/4+\varepsilon]
\)
admits a sublinear-query algorithm in the bounded-degree model?\footnote{The bounded-degree model is the adjacency-list model under the additional assumption that every variable participates in at most a constant number of constraints.}
\end{problem}

We remark that for every soundness parameter $s<3/4$, Yoshida~\cite{yoshida2011optimal} gives a constant-query algorithm for $\mathsf{Max2SAT}[1,s]$ in the bounded-degree model. We also note that, due to semidefinite programming, $\mathsf{Max2SAT}$ behaves similarly to $\mathsf{MaxCut}$ in the regime of~\Cref{thm:GW_in_P}; see~\cite{charikar2009near}.

\section{Preliminaries}

In this section, we first recall the Goemans--Williamson SDP~\cite{goemans1995improved} and the Goldreich--Ron algorithm~\cite{goldreich1999sublinear}. In~\Cref{subsec:algorithm}, we state our main technical theorem (\Cref{thm:main}) and show how it implies~\Cref{thm:main_informal,thm:streaming}. Finally, in~\Cref{subsec:notations}, we introduce notation that will be used in the proof of~\Cref{thm:main} in~\Cref{sec:proof_main}.

\paragraph{Notation.}
Throughout the remainder of the paper, let $G=([n],E)$ be a graph with vertex set $[n]$, where $E$ is the multiset of edges of $G$. We assume that $n\ge 2$ and $|E|\ge 1$. Let $A\in\bZ^{n\times n}$ denote the adjacency matrix of $G$: for distinct vertices $i,j\in[n]$, the entry $A_{ij}$ is the multiplicity of the edge $\{i,j\}$ in $E$. Thus $A$ is symmetric, and since $G$ has no self-loops, all diagonal entries of $A$ are zero.

The \emph{simple random walk} on $G$ is the Markov chain with state space $[n]$ which, at each step, chooses an incident edge uniformly at random (counting multiplicities) and moves to the other endpoint of that edge. The \emph{lazy random walk} on $G$ is the Markov chain with state space $[n]$ which, at each step, performs a simple-random-walk step with probability $1/2$ and stays at the current vertex with probability $1/2$.

\subsection{Semidefinite Programming}

We use the following formulation of the Goemans--Williamson SDP.

\begin{definition}
For a symmetric matrix $A\in\bZ^{n\times n}$, let $\textsc{BasicSDP}_{A}$ denote the following semidefinite program, whose variables are the entries of a positive semidefinite matrix $X\in\bR^{n\times n}$:
\begin{tcolorbox}[
  enhanced,
  breakable,
  title={$\textsc{BasicSDP}_{A}$ for symmetric matrix $A$},
  fonttitle=\bfseries,
  colframe=black,
  colback=white,
  boxrule=0.6pt,
  arc=6pt,
  left=6pt,
  right=6pt,
  top=4pt,
  bottom=4pt
]
\begin{alignat}{3}
\textup{maximize} \quad
    &&  \frac{1}{2}-\frac{\mathrm{tr}(AX)}{2\cdot\mathbf{1}^{T}A\mathbf{1}}
    \label{eq:SDP_objective}\\
\textup{subject to} \quad
    && X_{ii} &\le 1
    \quad && \forall\, i\in[n],
    \label{eq:SDP_no_more_than_1}\\
    && X &\succeq 0. \nonumber
\end{alignat}
\end{tcolorbox}
\end{definition}

\begin{remark}
The quantity $\mathbf{1}^{T}A\mathbf{1}$ appearing in~\eqref{eq:SDP_objective} is the sum of all entries of $A$, and hence equals $2|E|$. In the standard formulation of the Goemans--Williamson SDP, the diagonal constraints are $X_{ii}=1$ rather than $X_{ii}\le 1$. The two formulations are equivalent here: the diagonal entries of $A$ are all zero, so changing the diagonal entries of $X$ does not affect the objective, and increasing diagonal entries preserves positive semidefiniteness.
\end{remark}

We will use the following key property of the Goemans--Williamson SDP.

\begin{theorem}[{\cite[Theorem 3.1.1]{goemans1995improved}}]\label{thm:GW}
For any $\varepsilon\in[0,1]$, if the optimal value of $\textsc{BasicSDP}_{A}$ is at least $1-\varepsilon$, then the Max-Cut value of $G$ is at least $1-\sqrt{\varepsilon}$.
\end{theorem}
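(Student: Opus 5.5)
We prove \Cref{thm:GW} with the Goemans--Williamson hyperplane rounding, arguing edge by edge and then applying Jensen's inequality to a concave function. Let $X$ be an optimal solution of $\textsc{BasicSDP}_{A}$. By the remark above we may assume $X_{ii}=1$ for all $i$: raising the diagonal preserves positive semidefiniteness and does not change the objective. Factor $X=V^{T}V$ and let $v_1,\dots,v_n$ be the columns of $V$, which are unit vectors with $X_{ij}=\langle v_i,v_j\rangle$. Since $\mathrm{tr}(AX)=\sum_{i,j}A_{ij}X_{ij}=2\sum_{\{i,j\}\in E}\langle v_i,v_j\rangle$ and $\mathbf{1}^{T}A\mathbf{1}=2|E|$, the objective equals
\[
\frac{1}{|E|}\sum_{\{i,j\}\in E}\frac{1-\langle v_i,v_j\rangle}{2}.
\]
Writing $\theta_e\in[0,\pi]$ for the angle between $v_i$ and $v_j$ on edge $e=\{i,j\}$, and $\varepsilon_e=\frac{1+\cos\theta_e}{2}\in[0,1]$, the hypothesis says exactly that the average of $\varepsilon_e$ over $E$ is at most $\varepsilon$.

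For the rounding, choose a standard Gaussian vector $g$ and set $f(i)=\ind{\langle g,v_i\rangle\ge 0}$. The standard rotational-symmetry argument, which projects onto the plane spanned by $v_i$ and $v_j$, shows that edge $e$ is cut with probability $\theta_e/\pi$, so it fails to be cut with probability $1-\theta_e/\pi$. From $\cos\theta_e=2\varepsilon_e-1$ we get $\pi-\theta_e=\arccos(1-2\varepsilon_e)$. Hence the probability that $e$ is not cut is $h(\varepsilon_e)$, where $h(t)=\frac{1}{\pi}\arccos(1-2t)$.

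The key one-variable estimate, and the step needing the most care, is $h(t)\le \sqrt{t}$ for all $t\in[0,1]$. Put $t=\sin^2\phi$ with $\phi\in[0,\pi/2]$. Then $1-2t=\cos 2\phi$, so $h(t)=2\phi/\pi$, and the estimate becomes $\frac{2}{\pi}\phi\le\sin\phi$. This is Jordan's inequality, which follows from the concavity of $\sin$ on $[0,\pi/2]$.

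To finish, linearity of expectation and concavity of $\sqrt{\cdot}$ (via Jensen's inequality) give
\[
\mathbb{E}\bigl[\text{fraction of uncut edges}\bigr]\le \frac{1}{|E|}\sum_{e}\sqrt{\varepsilon_e}\le \sqrt{\frac{1}{|E|}\sum_e\varepsilon_e}\le\sqrt{\varepsilon}.
\]
Some realization of $g$ achieves at most this expectation, so it yields a bipartition cutting at least a $1-\sqrt{\varepsilon}$ fraction of the edges. Therefore $\mathrm{val}_{\mathrm{MC}}(G)\ge 1-\sqrt{\varepsilon}$.
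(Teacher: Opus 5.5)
Your proof is correct and complete. It takes a slightly different route from the one the paper relies on. The paper gives no proof of this statement and simply quotes Goemans and Williamson. Your argument uses the same random-hyperplane rounding as theirs, so the difference lies only in how the per-edge estimates are aggregated.

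Goemans and Williamson's analysis for large cuts works with the cut probability $h(t)=\frac{1}{\pi}\arccos(1-2t)$, as a function of the edge's SDP contribution $t=\frac{1-\cos\theta_e}{2}$. They replace $h$ by a convex minorant that equals $h$ on $[\gamma,1]$ (where $\gamma\approx 0.845$ minimizes $h(t)/t$) and is linear below $\gamma$, and then apply Jensen. This gives the sharper guarantee $h(1-\varepsilon)=1-\frac{2}{\pi}\arcsin\sqrt{\varepsilon}$, but only when $1-\varepsilon\ge\gamma$. Recovering the statement as phrased in the paper, for every $\varepsilon\in[0,1]$, therefore needs a separate easy treatment of $\varepsilon>1-\gamma$, for example via the $0.878$ guarantee.

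You instead bound the failure probability $\frac{2}{\pi}\arcsin\sqrt{\varepsilon_e}$ by the globally concave function $\sqrt{\varepsilon_e}$ (Jordan's inequality). This loses a factor of roughly $\pi/2$ in the deficit for small $\varepsilon$. In exchange, it yields exactly the $1-\sqrt{\varepsilon}$ form uniformly on $[0,1]$, with no case split.

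One small point you take for granted is that an optimal $X$ exists. This follows from compactness of the feasible set, since $|X_{ij}|\le\sqrt{X_{ii}X_{jj}}\le 1$. Alternatively, run your argument on near-optimal $X$ and let the slack tend to zero.
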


\subsection{The Goldreich--Ron Algorithm}\label{subsec:algorithm}

Given the graph $G=([n],E)$, a positive integer $m$ and a parameter $\varepsilon\in(0,1)$, let $\textsc{BipTest}(G,m,\varepsilon)$ denote the algorithm presented in~\Cref{alg:main}. 

\begin{algorithm}
    \DontPrintSemicolon
    \SetKwInOut{Input}{Input}\SetKwInOut{Output}{Output}
    \caption{$\textsc{BipTest}(G,m,\varepsilon)$}\label{alg:main}
    \Input{a graph $G$ with vertex set $[n]$, a positive integer $m$ and a parameter $\varepsilon\in (0,1)$}
    \Output{\texttt{accept} or \texttt{reject}}
    Let $k\gets \left\lceil 10^{3}\varepsilon^{-1}\sqrt{n}\right\rceil$\label{line:alg_k}\;
    Sample $\lceil 4/\varepsilon\rceil$ vertices with probability proportional to their degrees\label{line:sample_edge}\;
    \For{each sampled vertex $v$\label{line:each_vertex_v}}{
    \For{each $i\in\{1,2,\dots,k\}$}{
    Sample an integer $\ell_{i}$ from the binomial distribution $\mathrm{Bin}(m,\frac{1}{2})$\;
    Perform a simple random walk of length $\ell_{i}$ on $G$ starting from $v$\;
    Let $u_{i}$ be the endpoint of the walk\;
    }
    \If{there exist $i,j\in \{1,2,\dots,k\}$ such that $\ell_{i}\not\equiv \ell_{j}\pmod 2$ and $u_{i}=u_{j}$\label{line:odd_cycle}}{
    \Return{\textup{\texttt{reject}} (not bipartite)\label{line:reject}}
    }
    }
    \Return{\textup{\texttt{accept}} (close to bipartite)}
\end{algorithm}

The algorithm $\textsc{BipTest}(G,m,\varepsilon)$ is clearly a \emph{one-sided-error} tester, in the following sense.

\begin{proposition}
If $G$ is bipartite, then for any positive integer $m$ and parameter $\varepsilon\in (0,1)$, the algorithm $\textsc{BipTest}(G,m,\varepsilon)$ accepts with probability 1.
\end{proposition}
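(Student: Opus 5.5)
The plan is to show that when $G$ is bipartite, the rejection condition on line~\ref{line:odd_cycle} can never hold, whatever the random choices. Since rejection happens only there, it suffices to show that for any sampled start vertex $v$, two walks from $v$ with lengths of different parity cannot end at the same vertex.

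Fix a bipartition $f:[n]\to\{0,1\}$ such that every edge $\{u,w\}\in E$ has $f(u)\neq f(w)$; such an $f$ exists because $G$ is bipartite. Each step of a simple random walk moves along an edge of $G$, so it flips the value of $f$. By induction on the walk length, a walk of length $\ell$ from $v$ ends at a vertex $u$ with $f(u)\equiv f(v)+\ell \pmod 2$. This holds for every realization of the walk, including $\ell=0$.

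Now suppose $\ell_i\not\equiv \ell_j \pmod 2$. Then $f(u_i)\equiv f(v)+\ell_i$ and $f(u_j)\equiv f(v)+\ell_j$ differ mod $2$, so $f(u_i)\neq f(u_j)$ and hence $u_i\neq u_j$. The condition on line~\ref{line:odd_cycle} is therefore false for every sampled $v$ and every random outcome. The algorithm thus always reaches the final \texttt{accept}, so it accepts with probability $1$.

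There is no real obstacle here. The only points to check are that the argument does not depend on $m$, $\varepsilon$, or the sampling distribution, and that parallel edges cause no trouble, since each copy of an edge still joins the two sides of the bipartition.
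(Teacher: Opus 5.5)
Your proof is correct and is essentially the paper's argument. The paper notes that two walks from the same start with lengths of different parity ending at the same vertex would give an odd cycle, which cannot exist in a bipartite graph; you state the same fact directly through the invariant $f(u)\equiv f(v)+\ell \pmod 2$ of a fixed $2$-coloring, which also skips the small step from an odd closed walk to an odd cycle.
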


\begin{proof}
It is clear that whenever the ``if'' condition in~\Cref{line:odd_cycle} of~\Cref{alg:main} is satisfied, the algorithm has found an odd cycle in $G$, which means $G$ cannot be bipartite.
\end{proof}

We are now ready to state our main technical theorem.

\begin{theorem}[Main theorem]\label{thm:main}
Fix an arbitrary parameter $\varepsilon\in (0,1)$. Let $t=\left\lceil 200\varepsilon^{-8}\log_{2}n\right\rceil$ and $\varepsilon'=\varepsilon^{4}/9$. If $\textsc{BipTest}(G,4t,\varepsilon')$ (\Cref{alg:main}) rejects with probability at most $2/3$, then the Max-Cut value of $G$ is at least $1-\varepsilon$.
\end{theorem}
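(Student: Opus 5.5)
By \Cref{thm:GW}, it suffices to exhibit a PSD matrix $X$ with $X_{ii}\le 1$ and $\mathrm{tr}(AX)\le -(2-2\varepsilon^2)|E| = -(1-\varepsilon^{2})\mathbf{1}^TA\mathbf{1}$. Then the SDP value is at least $1-\varepsilon^2$, so the Max-Cut value is at least $1-\varepsilon$. The natural candidate comes from the lazy walk. Let $D$ be the degree matrix and $P=\tfrac12(I+D^{-1}A)$ the lazy walk matrix. The walk length $\ell\sim\mathrm{Bin}(m,1/2)$ with parity recorded is exactly a lazy walk of $m$ steps, in which each non-lazy step contributes a factor $-1$. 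So I would work with the signed lazy operator $Q=\tfrac12(I-D^{-1}A)$. For a start vertex $v$, the vector $q_v=e_v^TQ^{t}$ records $\Pr[\text{end at }u,\text{even}]-\Pr[\text{end at }u,\text{odd}]$ after $t$ lazy steps. The collision statistic the tester measures is related to $\sum_u p_v(u,\text{even})p_v(u,\text{odd})/\deg(u)$.

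The construction I have in mind is $X=\sum_v w_v\, D^{-1/2}\,\mathrm{(local\ matrix)}\,D^{-1/2}$. Concretely, take the PSD Gram matrix $Y=\sum_v \deg(v)\,(D^{-1}Q^{t}{}^{T}e_v)(\cdots)^T$, suitably normalized. Equivalently, $Y$ is the Gram matrix of the vectors $x_u=\big(\sqrt{\deg v}\,q_v(u)/\deg(u)\big)_v$. Diagonal entries $Y_{uu}$ relate to the return probabilities $\sum_v \deg(v)\,P^{t}(v,u)^2/\deg(u)^2$, which by reversibility equals $P^{2t}(u,u)/\deg(u)$ times a factor. This is where the truncation $X_{ii}\le1$ must be handled: normalize each $x_u$ to $x_u/\|x_u\|$ (or cap it). Symmetrically, $\langle x_u,x_{u'}\rangle$ for adjacent $u,u'$ should be close to $-\|x_u\|\|x_{u'}\|$, since $Q$ applied one more step flips sign along edges. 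The analysis then proceeds in three steps.

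First, I would translate "rejects with probability at most $2/3$" into the statement that for a $(1-O(\varepsilon'))$-fraction (degree-weighted) of start vertices $v$, the signed $4t$-step mass is nearly nonnegative in the sense that $\sum_u p^{\mathrm{even}}_v(u)p^{\mathrm{odd}}_v(u)/\deg$ is at most about $\varepsilon'^2/n$. This uses $k\approx\sqrt n/\varepsilon'$ walks and a birthday-paradox second-moment argument as in Goldreich--Ron. The result is that $\|p^{\mathrm{even}}_v-p^{\mathrm{odd}}_v\|^2\approx\|p^{\mathrm{even}}_v+p^{\mathrm{odd}}_v\|^2$ in the $D^{-1}$-weighted norm. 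Second, I would note that this quantity is $\langle e_v,Q^{8t}e_v\rangle$ versus $\langle e_v,P^{8t}e_v\rangle$ (up to the $D$ weights). I would then use a potential/telescoping argument over $t$: since $\log_2 n$ doublings with $t\gtrsim\varepsilon^{-8}\log n$ steps force that, for some intermediate length $s\in[t,4t]$, the energy $\|Q^{s}\|$ barely decreases when one more step is taken. A decrease of energy under one step of $Q$ is exactly the Dirichlet form $\sum_{uu'\in E}(y_u+y_{u'})^2$, which is the SDP edge violation. Third, I would sum these local Gram matrices over all start vertices (PSD is preserved under sums, as advertised in \Cref{subsec:discussion_max_cut}) and normalize diagonals, using Cauchy--Schwarz to show that normalization costs at most $O(\sqrt{\cdot})$ of the edge violation. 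This yields $\mathrm{tr}(AX)\le-(1-\varepsilon^2)\,2|E|$.

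The main obstacle is the second step: choosing the length $s$ via a pigeonhole over the $\log n$ scales so that both the Dirichlet form of $Q^{s}$ is small relative to its norm and the diagonal masses are not too concentrated. The latter matters because a few vertices with large $\|x_u\|$ could dominate before normalization. I would first try the spectral decomposition of the symmetric matrix $D^{-1/2}AD^{-1/2}$, writing everything as $\sum_\lambda ((1-\lambda)/2)^{2s}$-weighted quantities. Averaging over $s$ then shows that mass surviving to length $4t$ concentrates on eigenvalues $\lambda\le -1+O(\varepsilon^{2})$. The constants $\varepsilon^{-8}$ and $\varepsilon^4/9$ would be absorbed at this step.
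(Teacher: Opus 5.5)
\paragraph{Verdict: genuine gap.} Your outer frame matches the paper: sum per-start-vertex local Gram matrices over $v$, use that sums of PSD matrices are PSD, and invoke \Cref{thm:GW} at SDP value $1-\varepsilon^2$. The gap is at exactly the step you flag as the main obstacle, the per-vertex constraint $X_{ii}\le 1$, and neither remedy you suggest closes it.

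\paragraph{Why normalization fails.} Your vectors are linear in the signed walk distribution. So $Y_{uu}$ is a signed $2t$-step return probability divided by $\deg(u)$, and in a graph without expansion this varies by large factors between adjacent vertices. Your energy telescoping, or the spectral decomposition, can only give a norm-weighted bound: $\sum_{\{u,u'\}\in E}\|x_u+x_{u'}\|^2\le\delta\sum_u \deg(u)\|x_u\|^2$. After rescaling $x_u\mapsto x_u/\|x_u\|$, the SDP objective instead needs the edge-uniform average of $\|x_u+x_{u'}\|^2/\max\{\|x_u\|^2,\|x_{u'}\|^2\}$ to be small. Cauchy--Schwarz turns the first bound into the second only if $\|x_u\|\approx\|x_{u'}\|$ on most edges, and nothing in your argument provides that. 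Vertices with small return probability next to vertices with large return probability are essentially unconstrained by a norm-weighted bound. Likewise, ``the surviving mass concentrates on eigenvalues near $-1$'' is degree-weighted, aggregate information. It is of the same type as the smallest eigenvalue of the walk matrix being near $-1$, which by itself does not certify a global near-bipartition.

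\paragraph{The missing idea.} The paper uses square roots: $f_v=\sqrt{\tfrac12 Q_r\delta_v}-\sqrt{\tfrac12 R_r\delta_v}$, the difference of square roots of the even- and odd-parity endpoint densities. Then $f_v(i)^2\le W^r\delta_v(i)$ is \emph{linear} in the density, and $\mathbb{E}_{v\sim\mu}[W^r\delta_v(i)]=1$ for every $i$ by reversibility. Hence $X_{ii}\le 1$ holds automatically, with no renormalization. The rest of the paper is built around this choice:
\begin{enumerate}
\item The birthday paradox is used only in $L^1$ form, $\mathbb{E}_v[2-\|Q_m\delta_v-R_m\delta_v\|_1]\le 2\varepsilon'$. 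This transfers to square roots through $\langle\sqrt f,\sqrt g\rangle\le\sqrt{2-\|f-g\|_1}$.
\item \Cref{lem:Q_vs_PR} removes the parity.
\item Mixing is measured by relative entropy rather than $L^2$ energy. Its one-step drop under $W$ dominates $1-\langle\sqrt g,P\sqrt g\rangle$, and its total budget is $\log_2(1/\mu(\{v\}))$, which averages to at most $\log_2 n$.
\end{enumerate}

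\paragraph{Secondary problem in your first step.} Low rejection probability does not give $\sum_u p^{\mathrm{even}}_v(u)p^{\mathrm{odd}}_v(u)/\deg(u)\lesssim\varepsilon'^2/n$, even in bounded-degree graphs. Take a vertex carrying even-mass $1/2$ and odd-mass about $1/(10k)$. It contributes $\Theta(1/k)$ to that sum, far above $\varepsilon'^2/n=\Theta(1/k^2)$. Yet it yields a collision with probability at most about $1/20$. What the collision argument really certifies is the total-variation statement above, which is why the paper works with densities and their square roots rather than $L^2$ collision counts.
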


Since every step of~\Cref{alg:main} can be readily implemented in the multi-pass streaming model, we immediately obtain~\Cref{thm:streaming} from~\Cref{thm:main}.

\begin{proof}[Proof of~\Cref{thm:streaming} assuming~\Cref{thm:main}] Fix a positive integer $t$ and a parameter $\varepsilon\in (0,1)$. Let $\varepsilon'=\varepsilon^{4}/9$ and $k=\left\lceil 10^{3}\varepsilon'^{-1}\sqrt{n}\right\rceil$. Consider the algorithm $\textsc{BipTest}(G,4t,\varepsilon')$. We claim that all $\lceil 4/\varepsilon'\rceil\cdot k$ random walks in $\textsc{BipTest}(G,4t,\varepsilon')$ can be simulated by a streaming algorithm in parallel. Indeed, to advance one step in all the random walks, it clearly suffices to take 2 additional passes over the stream of edges. Therefore the algorithm $\textsc{BipTest}(G,4t,\varepsilon)$ can be implemented in $8t+O(1/\varepsilon')$ passes and $O(\varepsilon'^{-1}k\log n)$ space. 

Plugging~\Cref{thm:main} into the above, we obtain a streaming algorithm for $\mathsf{MaxCut}[1,1-\varepsilon]$ using $O(\varepsilon^{-8}\log n)$ passes and $O(\varepsilon^{-8}\sqrt{n}\log n)$ bits of memory.
\end{proof}

To implement~\Cref{alg:main} in the adjacency-list model, the only step that requires special attention is~\Cref{line:sample_edge}, where we need to sample vertices with probability proportional to their degrees. 

\begin{proof}[Proof of~\Cref{thm:main_informal} assuming~\Cref{thm:main}]
Fix a parameter $\varepsilon\in (0,1)$. Let 
\[\varepsilon'=\varepsilon^{4}/9,\qquad k=\left\lceil 10^{3}\varepsilon'^{-1}\sqrt{n}\right\rceil,\qquad\text{and}\qquad t=\left\lceil 200\varepsilon^{-8}\log_{2}n\right\rceil.
\]
On any $n$-vertex simple graph $G$ with no isolated vertices, the algorithm of Eden and Rosenbaum~\cite{eden2018sampling} is able to produce an edge of $G$ that is $(\varepsilon'/40)$-close to uniformly random (in total variation distance) over $E$, using $O(\varepsilon'^{-2}\sqrt{n})$ queries in the adjacency-list model. Using the Eden--Rosenbaum algorithm as a subroutine on~\Cref{line:sample_edge} of~\Cref{alg:main}, we can thus approximately simulate $\textsc{BipTest}(G,4t,\varepsilon')$ in the adjacency-list model with additive error at most
\[
\frac{\varepsilon'}{40}\cdot \lceil4/\varepsilon'\rceil\leq \frac{1}{5}.
\]
Furthermore, this approximate simulation of $\textsc{BipTest}(G,4t,\varepsilon')$ still accepts with probability 1 if $G$ is bipartite. Therefore, by~\Cref{thm:main}, this approximate simulation is a one-sided-error $\varepsilon$-tester for the bipartiteness of $G$ with no-case rejection probability at least $\frac{2}{3}-\frac{1}{5}=\frac{7}{15}$. Repeating the simulation twice yields a one-sided-error $\varepsilon$-tester with rejection probability at least \[1-\left(1-\frac{7}{15}\right)^{2}\geq \frac{2}{3}.\]

The final algorithm performs $2k\cdot \lceil 4/\varepsilon'\rceil=O(\varepsilon^{-8}\sqrt{n})$ random walks of length $4t=O(\varepsilon^{-8}\log n)$, and has total query complexity at most $O(\varepsilon'^{-2}\sqrt{n})+8kt\cdot\lceil 4/\varepsilon'\rceil=O(\varepsilon^{-16}\sqrt{n}\log n)$.
\end{proof}

The rest of the paper is devoted to proving~\Cref{thm:main}.

\subsection{Probability Space and Random Walk}\label{subsec:notations}

In this subsection, we set up the notation that will be used in~\Cref{sec:proof_main}.

Recall from~\Cref{line:sample_edge} of~\Cref{alg:main} that we sample vertices of $G$ with probability proportional to their degrees. Throughout the rest of the paper, we denote this distribution by $\mu$.

\begin{definition}\label{def:stationary_distribution}
Let $\mu$ be the probability distribution on $[n]$ defined by
\[
    \mu(\{i\})
    =
    \frac{\deg_G(i)}{2|E|}
    =
    \frac{1}{2|E|}\sum_{j=1}^{n} A_{ij}
    \qquad\text{for all } i\in[n].
\]
\end{definition}

We regard $([n],\mu)$ as a probability space and use the following standard notation.

\begin{definition}\label{def:inner_product_space}
Let $L^{2}(\mu)$ denote the vector space of all functions $[n]\to\bR$, equipped with the inner product
\[
    \inp{f}{g}
    =
    \Exu{i\sim\mu}{f(i)g(i)}
    \qquad\text{for all } f,g:[n]\to\bR.
\]
For $f\in L^{2}(\mu)$ and $p\ge 1$, define
\[
    \|f\|_{p}
    =
    \left(\Exu{i\sim\mu}{|f(i)|^{p}}\right)^{1/p}.
\]
\end{definition}

\begin{definition}
A function $f:[n]\to[0,+\infty)$ satisfying $\|f\|_{1}=1$ is called a \emph{probability density function} with respect to $\mu$. Given such a function $f$, we write $f\d\mu$ for the probability distribution on $[n]$ that assigns probability mass
\(
    f(i)\mu(\{i\})
\)
to each element $i\in[n]$.
\end{definition}

For any vertex $v\in [n]$ that is not isolated in $G$, we have $\deg_{G}(v)\geq 1$ and $\mu(\{v\})>0$. The point-mass distribution at $v$ can then be represented by the following density function.

\begin{definition}\label{def:point_mass}
For any non-isolated vertex $v\in [n]$, we define a probability density function $\delta_{v}:[n]\rightarrow [0,+\infty)$ by
\[
\delta_{v}(u)=\begin{cases}
\mu(\{v\})^{-1},&\text{if }v=u,\\
0,&\text{if }v\neq u.
\end{cases}
\]
Equivalently, let the distribution $\delta_{v}\d\mu$ be the point mass at $v$. 
\end{definition}

The main benefit of the notational framework we have established is that the simple random walk on $G$ can be represented neatly by a linear operator on the inner product space $L^{2}(\mu)$.

\begin{definition}\label{def:Markov_operator}
Define the simple-random-walk operator $P:L^{2}(\mu)\rightarrow L^{2}(\mu)$ as follows: for any function $f\in L^{2}(\mu)$, the function $Pf$ is defined by
\[
Pf(i)=\frac{\sum_{j=1}^{n}A_{ij}f(j)}{\sum_{j=1}^{n}A_{ij}}\qquad\text{for all }i\in [n].
\]
\end{definition}

Note that for any function $f:[n]\rightarrow\bR$, the value of $Pf$ at a vertex $i\in [n]$ is the average $f$-value of the neighbors of $i$. The fact that the simple random walk is a reversible Markov chain will be crucial to our proof. It implies that $P$ is a self-adjoint operator with respect to the inner product on $L^{2}(\mu)$. More concretely, it has the following standard consequence.

\begin{proposition}\label{prop:distribution_of_endpoint}
Fix a non-isolated vertex $v\in [n]$ and a nonnegative integer $r$. If we perform a simple random walk of length $r$ on $G$ starting from $v$, then the distribution of the endpoint of the walk is $(P^{r}\delta_{v})\d\mu$.
\end{proposition}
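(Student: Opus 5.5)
The statement to prove is \Cref{prop:distribution_of_endpoint}. I would argue by induction on $r$, tracking the distribution of the walk's position one step at a time and checking that one step of the chain corresponds to one application of $P$ at the level of densities.

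\textbf{Base case.} For $r=0$ the endpoint is $v$ itself. Its distribution is $\delta_v\,\d\mu$ by \Cref{def:point_mass}.

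\textbf{Inductive step.} Suppose the position after $r$ steps has distribution $g\,\d\mu$, where $g=P^r\delta_v$. Then the probability of being at $j$ is $g(j)\mu(\{j\})=g(j)\deg_G(j)/(2|E|)$. A simple-random-walk step from $j$ moves to $i$ with probability $A_{ji}/\deg_G(j)$. So the probability of being at $i$ after $r+1$ steps is
\[
\sum_{j} g(j)\frac{\deg_G(j)}{2|E|}\cdot\frac{A_{ji}}{\deg_G(j)}=\frac{1}{2|E|}\sum_j A_{ij}g(j).
\]
Here I used that $A$ is symmetric. The right-hand side equals $\frac{\deg_G(i)}{2|E|}\cdot Pg(i)=(Pg)(i)\,\mu(\{i\})$ whenever $\deg_G(i)\ge 1$. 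If $i$ is isolated, both sides are $0$: on the left every $A_{ij}=0$, and on the right $\mu(\{i\})=0$.

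Hence the new distribution is $(Pg)\,\d\mu=(P^{r+1}\delta_v)\,\d\mu$, which completes the induction.

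\textbf{Well-definedness.} There is one minor issue: $Pf(i)$ is not defined at an isolated vertex $i$, since the denominator is zero. I would adopt the convention $Pf(i)=0$ there. Such points have $\mu$-measure zero, so the choice affects nothing. I would also note that the walk, having started at a non-isolated vertex, never visits an isolated vertex. The essential ingredient is the symmetry $A_{ij}=A_{ji}$, i.e.\ reversibility; the rest is bookkeeping, so no step should be a real obstacle.
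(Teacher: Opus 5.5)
Your proof is correct and rests on the same key fact as the paper's, namely the symmetry $A_{ij}=A_{ji}$ (reversibility), which turns the forward evolution of the walk's distribution into an application of $P$ to the density. The paper unrolls $P^{r}$ into a sum over length-$r$ paths and reverses each whole path at once, whereas you do the same degree cancellation one step at a time by induction; your convention $Pf(i)=0$ at isolated vertices also settles the $0/0$ issue that the paper's definition of $P$ leaves implicit.
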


\begin{proof}
For any vertex $u$, the probability of $u$ under the distribution $(P^{r}\delta_{v})\d\mu$ is
\begin{align*}
\sum_{\substack{(v_{0},v_{1},\dots,v_{r})\in [n]^{r+1}\\ v_{0}=u}}\frac{\prod_{i=0}^{r-1}A_{v_{i}v_{i+1}}}{\prod_{i=0}^{r-1}\deg_{G}(v_{i})}\delta_{v}(v_{r})\mu(\{u\})&=\sum_{\substack{(v_{0},v_{1},\dots,v_{r})\in [n]^{r+1}\\ v_{0}=u,\; v_{r}=v}}\frac{\prod_{i=0}^{r-1}A_{v_{i}v_{i+1}}}{\prod_{i=1}^{r}\deg_{G}(v_{i})}\tag{using~\Cref{def:stationary_distribution,def:point_mass}}\\
&=\sum_{\substack{(v_{r},v_{r-1},\dots,v_{0})\in [n]^{r+1}\\ v_{r}=v,\; v_{0}=u}}\prod_{i=0}^{r-1}\frac{A_{v_{r-i}v_{r-i-1}}}{\deg_{G}(v_{r-i})},\tag{using the fact that $A$ is symmetric}
\end{align*}
which clearly equals the probability that a simple random walk of length $r$ starting from $v$ ends at $u$.
\end{proof}

\section{Proof of Main Result}\label{sec:proof_main}

In this section, we prove the main technical theorem,~\Cref{thm:main}. The proof consists of four main steps, carried out in~\Cref{subsec:birthday,subsec:solution_SDP,subsec:remove_parity,subsec:mixing}, respectively. In~\Cref{subsec:conclude}, the four steps will be combined together to conclude the proof.

\subsection{Birthday Paradox}\label{subsec:birthday}

In each iteration of the ``for'' loop on~\Cref{line:each_vertex_v} of~\Cref{alg:main}, the algorithm collects $O(\sqrt n)$ endpoints of lazy random walks. These endpoints can be partitioned into two sets according to the parity of the number of non-lazy steps taken. If the algorithm does not reject on~\Cref{line:reject}, then these two sets must be disjoint.

This observation naturally leads to a birthday-paradox argument: if we draw $O(\sqrt n)$ samples from each of two distributions on $[n]$ and observe no collision with high probability, then the two distributions must be far apart in total variation distance.

Before carrying out this argument in~\Cref{lem:even_odd_clash}, we first define several operators that will be used there and throughout the rest of the paper.

\begin{definition}\label{def:Q_r_and_R_r}
We define two operators $W,S:L^{2}(\mu)\rightarrow L^{2}(\mu)$ by letting $W=\frac{1}{2}(I+P)$ and $S=\frac{1}{2}(I-P)$. For any nonnegative integer $r$, define two operators $Q_{r},R_{r}:L^{2}(\mu)\rightarrow L^{2}(\mu)$ by
\[
Q_{r}=W^{r}+S^{r}\qquad\text{and}\qquad R_{r}=W^{r}-S^{r}.
\] 
\end{definition}

\begin{lemma}\label{lem:even_odd_clash}
Fix a parameter $\varepsilon\in(0,1)$, a non-isolated vertex $v\in[n]$, and a positive integer $m$. Suppose that
\[
    \left\|Q_m\delta_v-R_m\delta_v\big.\right\|_1\leq 2-\varepsilon .
\]
If we perform $k\geq 10^{3}\varepsilon^{-1}\sqrt{n}$ lazy random walks of length $m$ on $G$, each starting from $v$, then with probability at least $2/3$ there exists a vertex $u\in[n]$ that is the endpoint of both a walk with an odd number of non-lazy steps and a walk with an even number of non-lazy steps.
\end{lemma}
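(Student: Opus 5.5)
Write the endpoint of a lazy walk of length $m$ from $v$ by the parity of its number of non-lazy steps. The number of non-lazy steps $\ell$ is $\mathrm{Bin}(m,\frac12)$, and given $\ell$ the endpoint has law $(P^\ell\delta_v)\d\mu$ by \Cref{prop:distribution_of_endpoint}. Expanding $W^m=2^{-m}\sum_\ell\binom{m}{\ell}P^\ell$ and $S^m=2^{-m}\sum_\ell\binom m\ell(-1)^\ell P^\ell$ gives
\[
\tfrac12 Q_m\delta_v=2^{-m}\sum_{\ell\text{ even}}\tbinom m\ell P^\ell\delta_v,\qquad
\tfrac12 R_m\delta_v=2^{-m}\sum_{\ell\text{ odd}}\tbinom m\ell P^\ell\delta_v .
\]
Define measures $\alpha=\frac12(Q_m\delta_v)\d\mu$ and $\beta=\frac12(R_m\delta_v)\d\mu$. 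These are nonnegative, and $\alpha(\{u\})$ (resp.\ $\beta(\{u\})$) is the probability that a single walk ends at $u$ with an even (resp.\ odd) number of non-lazy steps. Moreover $\alpha([n])+\beta([n])=1$.

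The hypothesis reads $\sum_u|\alpha(u)-\beta(u)|\le 1-\varepsilon/2$. Since $|a-b|=a+b-2\min(a,b)$, this gives the overlap bound $\sum_u\min(\alpha(u),\beta(u))\ge \varepsilon/4$.

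Next run a birthday-paradox argument. Split the $k$ walks into two halves of size $k/2$ each. We look for a collision between an even-type endpoint in the first half and an odd-type endpoint in the second half (sufficient but not necessary). Let $\gamma(u)=\min(\alpha(u),\beta(u))$. Define the random variable $Z=\sum_{i\in H_1,\,j\in H_2}\mathbf 1[\text{walk }i\text{ even, ends at }u_i;\ \text{walk }j\text{ odd};\ u_i=u_j]$. Then
\[
\mathbb E Z=(k/2)^2\sum_u\alpha(u)\beta(u)\ge (k/2)^2\sum_u\gamma(u)^2\ge (k/2)^2\frac{(\sum_u\gamma(u))^2}{n}\ge \frac{k^2\varepsilon^2}{64n},
\]
where the middle step is Cauchy--Schwarz over the $n$ vertices. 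With $k\ge10^3\varepsilon^{-1}\sqrt n$ this is at least about $10^4$.

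To turn the expectation into a probability, apply the second moment method: $\Pr[Z>0]\ge(\mathbb EZ)^2/\mathbb E Z^2$. Computing $\mathbb E Z^2$ means summing over pairs of pairs $(i,j),(i',j')$. Disjoint index pairs are independent and contribute at most $(\mathbb EZ)^2$. Pairs sharing exactly one index contribute terms like $(k/2)^3\sum_u\alpha(u)\beta(u)^2$ and $(k/2)^3\sum_u\alpha(u)^2\beta(u)$.

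These shared-index terms are the main obstacle, since a few heavy vertices could blow them up. I would avoid this by restricting the sum to the measure $\gamma$: count only collisions weighted by $\gamma$, or equivalently bound $\sum_u\gamma(u)^3\le(\sum_u\gamma(u)^2)^{3/2}$ using $\|\cdot\|_3\le\|\cdot\|_2$.

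A cleaner alternative works directly with $\gamma$. Define the normalized distribution $\pi=\gamma/\|\gamma\|_1$. Each even-type endpoint can be viewed as a sample from $\pi$ with probability at least $\|\gamma\|_1\ge\varepsilon/4$, via coupling ($\alpha\ge\gamma$ pointwise, so a draw from $\alpha$ can be realized as a $\pi$-sample with probability $\|\gamma\|_1$), and likewise for odd-type endpoints. By Chernoff, each half yields at least $\varepsilon k/16\ge 60\sqrt n$ independent $\pi$-samples, even-type from $H_1$ and odd-type from $H_2$, with probability $0.99$. The standard two-set birthday bound then gives a common value with probability $\ge 1-\exp(-s^2\|\pi\|_2^2)\ge 1-e^{-s^2/n}$, using $\sum_u\pi(u)^2\ge 1/n$. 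Here $s^2\ge 3600n$, which gives failure probability far below $1/3$. This independence-based argument sidesteps the variance computation entirely, so I would use it as the primary route.
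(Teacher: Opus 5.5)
Your reduction is the same as the paper's. The paper also identifies the even- and odd-type endpoint laws as $\tfrac12 Q_m\delta_v\,d\mu$ and $\tfrac12 R_m\delta_v\,d\mu$, derives $\sum_u\min(\alpha(u),\beta(u))\ge\varepsilon/4$ from $|a-b|=a+b-2\min(a,b)$, and then cites a standard birthday-paradox lemma. Your coupling is valid: each walk is marked so that, with probability $\|\gamma\|_1$, it is of the required parity and its endpoint is an exact $\pi$-sample, with independent marks across walks. Your Chernoff step on the number of marked walks per half is also fine. Together they make that citation self-contained.

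The step that fails as written is your ``standard two-set birthday bound'' $\mathbb{P}[\text{no common value}]\le\exp\bigl(-s^2\sum_u\pi(u)^2\bigr)$. It is false when $\pi$ has atoms of mass comparable to $1/s$. Take an atom $a$ of mass $2/s$ and spread the remaining mass over a huge number of points. Then one of the two sample sets misses $a$ with probability about $2e^{-2}-e^{-4}\approx 0.25$, and cross collisions elsewhere are negligible. The claimed bound, however, gives about $e^{-4}\approx 0.02$. The reason is that heavy atoms saturate: once an atom is likely to be hit, its contribution stops growing like $s^2\pi(u)^2$. So the final step is not justified by the inequality you invoke.

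The conclusion you actually need is true: $s\ge 60\sqrt n$ samples per side from any distribution on $[n]$ collide with high probability. The cleanest proof is the second-moment computation you sketched and set aside. Apply it to the i.i.d.\ $\pi$-samples, where the heavy-vertex issue for $\alpha,\beta$ no longer arises. Let $C$ count pairs $(i,j)$ with $X_i=Y_j$, and write $p_r=\sum_u\pi(u)^r$. Then $\mathbb{E}C=s^2p_2$ and $\mathrm{Var}(C)\le\mathbb{E}C+2s^3p_3$. By Chebyshev, together with $p_3\le p_2^{3/2}$ and $p_2\ge 1/n$, we get $\mathbb{P}[C=0]\le n/s^2+2\sqrt n/s\le 0.04$. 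With this replacement, or by citing a correct birthday lemma as the paper does, your proof is complete.
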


\begin{proof}
The number of non-lazy steps in a lazy random walk of length $m$ follows the binomial distribution $\mathrm{Bin}(m,\frac{1}{2})$, and hence is even with probability exactly $\frac{1}{2}$. Furthermore, using~\Cref{prop:distribution_of_endpoint}, it is easy to see that conditioned on the number of non-lazy steps being even, the distribution of the endpoint is given by the probability density function
\[
\left(\left(\frac{I+P}{2}\right)^{m}+\left(\frac{I-P}{2}\right)^{m}\right)\delta_{v}=Q_{m}\delta_{v}.
\]
Similarly, conditioned on the number of non-lazy steps being odd, the distribution of the endpoint is $R_{m}\delta_{v}\d\mu$. Since the total variation distance between the two conditional distribution is
\[
\left\|Q_{m}\delta_{v}\d\mu -R_{m}\delta_{v}\d\mu\big.\right\|_{\mathrm{TV}}=\frac{1}{2}\left\|Q_m\delta_v-R_m\delta_v\big.\right\|_1\leq 1-\frac{\varepsilon}{2},
\]
the desired conclusion follows from standard birthday paradox arguments (\Cref{lem:birthday}).
\end{proof}

Using~\Cref{lem:even_odd_clash}, we can now convert the assumption that~\Cref{alg:main} rejects with low probability into a clean analytic statement:

\begin{corollary}\label{cor:even_odd_clash}
Fix a positive integer $m$ and a parameter $\varepsilon\in (0,1)$. Suppose $\textsc{BipTest}(G,m,\varepsilon)$ (\Cref{alg:main}) rejects with probability at most $2/3$. Then
\[
\Exu{v\sim \mu}{2-\left\|Q_m\delta_v-R_m\delta_v\big.\right\|_1\Big.}\leq 2\varepsilon.
\]
\end{corollary}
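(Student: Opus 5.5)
The plan is to argue by contrapositive: assume the expectation exceeds $2\varepsilon$ and show that $\textsc{BipTest}(G,m,\varepsilon)$ then rejects with probability greater than $2/3$.

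Write $X_v = 2-\|Q_m\delta_v-R_m\delta_v\|_1$ for $v\sim\mu$. Since $Q_m\delta_v$ and $R_m\delta_v$ are probability densities (they are the two conditional endpoint distributions identified in the proof of \Cref{lem:even_odd_clash}), the triangle inequality gives $\|Q_m\delta_v-R_m\delta_v\|_1\le 2$. Hence $X_v\in[0,2]$.

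First step: from $\Exu{v\sim\mu}{X_v}>2\varepsilon$ and $X_v\le 2$, a reverse-Markov bound gives
\[
2\varepsilon<\Exu{v\sim\mu}{X_v}\le 2\Pr[X_v\ge \varepsilon]+\varepsilon\left(1-\Pr[X_v\ge\varepsilon]\right).
\]
This rearranges to $\Pr_{v\sim\mu}[X_v\ge\varepsilon]>\varepsilon/(2-\varepsilon)\ge\varepsilon/2$.

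Second step: look at one iteration of the loop on \Cref{line:each_vertex_v}. Its starting vertex $v$ is distributed as $\mu$. The walks are simple walks of length $\ell_i\sim\mathrm{Bin}(m,\frac12)$, which is exactly a lazy walk of length $m$ in which $\ell_i$ counts the non-lazy steps. The rejection condition on \Cref{line:odd_cycle} is precisely the collision event in \Cref{lem:even_odd_clash}, and $k\ge 10^3\varepsilon^{-1}\sqrt n$. So whenever $X_v\ge\varepsilon$, that iteration rejects with probability at least $2/3$. Unconditionally, each iteration therefore rejects with probability greater than $(\varepsilon/2)(2/3)=\varepsilon/3$.

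Third step: the $\lceil 4/\varepsilon\rceil$ iterations use independent starting vertices and independent walks. The probability that none of them rejects is less than
\[
(1-\varepsilon/3)^{4/\varepsilon}\le e^{-4/3}<1/3,
\]
so the algorithm rejects with probability greater than $2/3$, which contradicts the hypothesis.

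The main point to check is the identification of the algorithm's walks with the lazy walks of \Cref{lem:even_odd_clash}, including matching the parity of $\ell_i$ with the parity of the number of non-lazy steps. The rest is routine.
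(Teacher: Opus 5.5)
Your proof is correct and follows the paper's argument essentially step for step. You use a reverse-Markov bound with $\|Q_m\delta_v-R_m\delta_v\|_1\le 2$ to show that $X_v\ge\varepsilon$ with $\mu$-probability at least $\varepsilon/2$, then \Cref{lem:even_odd_clash} to get per-iteration rejection probability at least $\varepsilon/3$, then $(1-\varepsilon/3)^{4/\varepsilon}<1/3$ over the independent iterations. The only differences from the paper are minor: you get the bound $\|Q_m\delta_v-R_m\delta_v\|_1\le 2$ from the triangle inequality rather than from $L^1$-contraction of $S^m$, and you spell out the identification of the binomial-length simple walks with lazy walks, which the paper leaves implicit.
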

\begin{proof}
Assume that the conclusion does not hold. Since $\left\|Q_m\delta_v-R_m\delta_v\big.\right\|_1\leq 2$ for any $v\in [n]$,\footnote{This is because the operator $\frac{1}{2}(Q_{m}-R_{m})=S^{m}$ contracts $L^{1}$-norm. Indeed, since the Markov operator $P$ contracts $L^{1}$-norm, so does $S=\frac{1}{2}(I-P)$.} it follows that
\[
\Pru{v\sim \mu}{2-\left\|Q_m\delta_v-R_m\delta_v\big.\right\|_1\geq \varepsilon\Big.}\geq \frac{\varepsilon}{2}.
\]
Therefore, when $v$ is sampled from $\mu$, with probability at least $\varepsilon/2$ the conclusion of~\Cref{lem:even_odd_clash} holds for $v$. This implies that in each iteration of the ``for'' loop on~\Cref{line:each_vertex_v} of~\Cref{alg:main}, the algorithm rejects on with probability at least $\varepsilon/3$ on~\Cref{line:reject}. Since there are $\lceil 4/\varepsilon\rceil$ iterations of this ``for'' loop, and each iteration uses independent randomness, it follows that $\textsc{BipTest}(G,m,\varepsilon)$ accepts with probability at most $(1-\varepsilon/3)^{4/\varepsilon}<1/3$. This contradicts the assumption that $\textsc{BipTest}(G,m,\varepsilon)$ accepts with probability at most $2/3$.
\end{proof}

\subsection{Approximate Solution to the SDP}\label{subsec:solution_SDP}

In this subsection, we construct a feasible solution to the Goemans--Williamson SDP using the random walk operators defined in~\Cref{def:Q_r_and_R_r}. We will then prove in~\Cref{subsec:remove_parity,subsec:mixing,subsec:conclude} that this feasible solution achieves a value close to 1 if~\Cref{alg:main} rejects with low probability.

As argued in the proof of~\Cref{lem:even_odd_clash}, the functions $Q_{m}\delta_{v}$ and $R_{m}\delta_{v}$ are probability density functions for any non-isolated vertex $v$ and positive integer $m$. In particular, they are nonnegative functions, and hence the following definition is justified.  

\begin{definition}\label{def:f_r}
For any non-isolated vertex $v\in [n]$ and any nonnegative integer $r$, we define a function $f_{v}^{(r)}\in L^{2}(\mu)$ by
\[
f^{(r)}_{v}=\sqrt{\frac{1}{2}Q_{r}\delta_{v}}-\sqrt{\frac{1}{2}R_{r}\delta_{v}}.
\]
\end{definition}

Roughly speaking, the function $f_{v}^{(r)}$ corresponds to a ``local bipartition'' of a subset of vertices found in an iteration of Goldreich and Ron's analysis (see~\cite[Section 4.6]{goldreich1999sublinear}). As mentioned in~\Cref{subsec:discussion_max_cut}, the key benefit of working with the SDP relaxation of Max-Cut is that these ``local bipartitions'' can be effortlessly patched together into a positive semidefinite matrix:

\begin{proposition}\label{prop:approximate_solution}
Let $r$ be a fixed positive integer. If we define $X\in \bR^{n\times n}$ by letting 
\begin{equation}\label{eq:def_of_X}
X_{ij}=\Exu{v\sim\mu}{f_{v}^{(r)}(i)f_{v}^{(r)}(j)}
\end{equation}
for each $i,j\in [n]$, then we have 
\begin{enumerate}[label=(\arabic*)]
\item $X$ is positive semi-definite,
\item $X_{ii}\leq 1$ for any $i\in [n]$, and
\item $\mathrm{tr}(AX)=2|E|\cdot \Exu{v\sim \mu}{\inp{f_{v}^{(r)}}{Pf_{v}^{(r)}}}$.
\end{enumerate}
\end{proposition}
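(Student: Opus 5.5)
Each of the three claims is a short computation from the definition of $X$ as an average of rank-one matrices, so I will verify them in turn.

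\textbf{Part (1).} For each non-isolated $v$, let $\phi_v\in\mathbb{R}^n$ be the vector with entries $\phi_v(i)=f_v^{(r)}(i)$. Then
\[
X=\sum_{v}\mu(\{v\})\,\phi_v\phi_v^{T}.
\]
This is a nonnegative combination of rank-one positive semidefinite matrices, hence $X\succeq 0$. Only non-isolated $v$ carry positive $\mu$-mass, so $f_v^{(r)}$ is defined wherever it is needed.

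\textbf{Part (2).} The diagonal entry is
\[
X_{ii}=\mathbb{E}_{v\sim\mu}\big[f_v^{(r)}(i)^2\big].
\]
Put $a=\tfrac12 Q_r\delta_v(i)$ and $b=\tfrac12 R_r\delta_v(i)$, both nonnegative. Then
\[
f_v^{(r)}(i)^2=(\sqrt a-\sqrt b)^2\le a+b=W^r\delta_v(i),
\]
since $\tfrac12(Q_r+R_r)=W^r$. So it suffices to show $\mathbb{E}_{v\sim\mu}[W^r\delta_v(i)]\le 1$; I expect this to be exactly $1$.

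\textbf{The averaging step (the only real content).} Fix $g\in L^2(\mu)$. Then
\[
\mathbb{E}_{v\sim\mu}[g(v)\,\delta_v(i)]=g(i),
\]
so by linearity $\mathbb{E}_{v\sim\mu}[\delta_v(\cdot)]=\mathbf 1$, the constant function. Applying $W^r$ linearly gives
\[
\mathbb{E}_{v\sim\mu}[W^r\delta_v]=W^r\mathbf 1=\mathbf 1,
\]
because $P\mathbf 1=\mathbf 1$ and hence $W\mathbf 1=\mathbf 1$. This yields $X_{ii}\le 1$.

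One can also argue via self-adjointness: $\mathbb{E}_v[W^r\delta_v(i)]=\langle W^r\delta_i,\mathbf 1\rangle$ for the symmetric kernel. The direct linearity argument is cleaner, and I will use it.

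\textbf{Part (3).} Write $f_v=f_v^{(r)}$. By definition of $X$,
\[
\mathrm{tr}(AX)=\sum_{i,j}A_{ij}X_{ij}=\mathbb{E}_{v\sim\mu}\Big[\sum_{i,j}A_{ij}f_v(i)f_v(j)\Big].
\]
For any $f$, using $\mu(\{i\})=\deg_G(i)/(2|E|)$ and $\sum_j A_{ij}=\deg_G(i)$,
\[
\sum_{i,j}A_{ij}f(i)f(j)=\sum_i \deg_G(i)\,f(i)\,Pf(i)=2|E|\sum_i\mu(\{i\})\,f(i)\,Pf(i)=2|E|\,\langle f,Pf\rangle.
\]
Substituting this for each $f_v$ gives the claim.

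I expect no real obstacle. The only step needing care is the averaging identity $\mathbb{E}_v[W^r\delta_v]=\mathbf 1$ in part (2), which rests on $\mathbb{E}_v[\delta_v]=\mathbf 1$ together with $W\mathbf 1=\mathbf 1$.
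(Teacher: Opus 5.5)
Your proposal is correct and follows the paper's proof essentially step for step. You get positive semidefiniteness from $X$ being a $\mu$-average of outer products $\phi_v\phi_v^{T}$, while the paper evaluates $a^{T}Xa$ directly. You get the diagonal bound from $(\sqrt a-\sqrt b)^2\le a+b$ together with $\mathbb{E}_{v\sim\mu}[W^r\delta_v]=W^r\mathbf{1}=\mathbf{1}$, and the trace identity by expanding $\sum_{i,j}A_{ij}f(i)f(j)=2|E|\langle f,Pf\rangle$, exactly as the paper does.
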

\begin{proof}
For any vector $a\in \bR^{n}$, we have
\[
a^{T}Xa=\Exu{v\sim\mu}{\left(\sum_{i=1}^{n}a_{i}f^{(r)}_{v}(i)\right)^{2}}\geq 0.
\]
Combined with the obvious fact that $X$ is symmetric, this implies $X$ is positive semi-definite.

For any $i\in [n]$, we have
\[
X_{ii}=\Exu{v\sim \mu}{f_{v}^{(r)}(i)^{2}}\leq \Exu{v\sim \mu}{\frac{1}{2}Q_{r}\delta_{v}(i)+\frac{1}{2}R_{r}\delta_{v}(i)}= \Exu{v\sim \mu}{W^{r}\delta_{v}(i)\Big.}=W^{r}\mathbf{1}(i)=1.
\]

We also have the direct calculation
\begin{align*}
\mathrm{tr}(AX)&=\sum_{i=1}^{n}\sum_{j=1}^{n}A_{ij}\Exu{v\sim\mu}{f_{v}^{(r)}(i)f_{v}^{(r)}(j)}\tag{using \eqref{eq:def_of_X}}\\
&=\Exu{v\sim\mu}{2|E|\cdot\Exu{i\sim \mu}{f_{v}^{(r)}(i)\cdot \frac{\sum_{j=1}^{n}A_{ij}f_{v}^{(r)}(j)}{\sum_{j=1}^{n}A_{ij}}}}\tag{using~\Cref{def:stationary_distribution}}\\
&=2|E|\cdot \Exu{v\sim\mu}{\inp{f_{v}^{(r)}}{Pf_{v}^{(r)}}},\tag{using~\Cref{def:inner_product_space,def:Markov_operator}}
\end{align*}
which proves the desired identity in the third item.
\end{proof}

\begin{corollary}\label{cor:approximate_solution}
The optimal value of $\textsc{BasicSDP}_{A}$ is at least
\[
1-\frac{1}{2}\cdot\inf_{r\in\bZ,\,r>0}\left\{\Exu{v\sim \mu}{\inp{f_{v}^{(r)}}{Pf_{v}^{(r)}}+1}\right\}.
\]
\end{corollary}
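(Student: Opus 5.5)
The plan is to apply \Cref{prop:approximate_solution} once for each positive integer $r$, then take the best $r$. Fix $r>0$ and let $X$ be the matrix defined in \eqref{eq:def_of_X}. By items (1) and (2) of \Cref{prop:approximate_solution}, $X$ is a feasible solution of $\textsc{BasicSDP}_{A}$: it is positive semidefinite and satisfies every constraint \eqref{eq:SDP_no_more_than_1}. So the optimal value is at least the objective \eqref{eq:SDP_objective} evaluated at $X$.

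Next I compute that objective. By the remark after the SDP definition, $\mathbf{1}^{T}A\mathbf{1}=2|E|$. By item (3) of \Cref{prop:approximate_solution}, $\mathrm{tr}(AX)=2|E|\cdot\Exu{v\sim\mu}{\inp{f_{v}^{(r)}}{Pf_{v}^{(r)}}}$. Substituting these,
\[
\frac{1}{2}-\frac{\mathrm{tr}(AX)}{2\cdot\mathbf{1}^{T}A\mathbf{1}}
=\frac{1}{2}-\frac{1}{2}\Exu{v\sim\mu}{\inp{f_{v}^{(r)}}{Pf_{v}^{(r)}}}
=1-\frac{1}{2}\Exu{v\sim\mu}{\inp{f_{v}^{(r)}}{Pf_{v}^{(r)}}+1}.
\]
The last equality just rewrites $\tfrac12$ as $1-\tfrac12$ and moves the extra $1$ inside the expectation. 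Since $\mu$ is a probability measure, this is valid.

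Thus, for every $r>0$, the optimal value is at least $1-\frac{1}{2}\Exu{v\sim\mu}{\inp{f_{v}^{(r)}}{Pf_{v}^{(r)}}+1}$. Taking the supremum of the right-hand side over $r$ turns the expectation into its infimum over $r$, which is the claimed bound. The infimum is finite, because $|\inp{f}{Pf}|\le\|f\|_2^2$ and $\|f_v^{(r)}\|_2^2\le 1$, as in the $X_{ii}$ computation; so no degenerate case arises.

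I do not expect a real obstacle. The only point needing care is the normalization in the objective: checking that $\mathbf{1}^{T}A\mathbf{1}=2|E|$ exactly cancels the factor $2|E|$ from item (3). Also, $f_v^{(r)}$ is only defined for non-isolated $v$, and $\mu$ gives zero mass to isolated vertices, so the expectation is well defined.
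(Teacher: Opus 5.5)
Your proposal is correct and follows the paper's proof. For each fixed $r>0$ you use items (1)--(2) of the proposition for feasibility of $X$, evaluate the objective via item (3) and $\mathbf{1}^{T}A\mathbf{1}=2|E|$, and then optimize over $r$. You also rightly note that the objective equals $1-\frac{1}{2}\mathbb{E}_{v\sim\mu}[\langle f_v^{(r)},Pf_v^{(r)}\rangle+1]$ exactly, where the paper only writes $\geq$.
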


\begin{proof}
For any positive integer $r$, the matrix $X\in \bR^{n\times n}$ defined in \eqref{eq:def_of_X} is a feasible solution to $\textsc{BasicSDP}_{A}$, due to the first and second conclusions of~\Cref{prop:approximate_solution}. The value achieved by this matrix $X$ is
\[
\frac{1}{2}-\frac{1}{4|E|}\cdot\mathrm{tr}(AX)\geq 1-\frac{1}{2}\cdot \Exu{v\sim \mu}{\inp{f_{v}^{(r)}}{Pf_{v}^{(r)}}+1},
\]
where the inequality follows from the third conclusion of~\Cref{prop:approximate_solution}. Therefore, the optimal value of $\textsc{BasicSDP}_{A}$ is at least
\begin{equation}\label{eq:opt_at_least}
1-\frac{1}{2}\cdot \Exu{v\sim \mu}{\inp{f_{v}^{(r)}}{Pf_{v}^{(r)}}+1}
\end{equation}
for any positive integer $r$, as desired.
\end{proof}

Note that for any non-isolated vertex $v\in [n]$ and any positive integer $r$, we have \[\left\|f_{v}^{(r)}\right\|_{2}^{2}\leq \Exu{i\sim \mu}{\frac{1}{2}Q_{r}\delta_{v}(i)+\frac{1}{2}R_{r}\delta_{v}(i)}=\Exu{i\sim \mu}{W^{r}\delta_{v}(i)}=1.\]
Furthermore, the Markov operator $P$ contracts $L^{2}$-norm, and hence
\[
\inp{f_{v}^{(r)}}{Pf_{v}^{(r)}}\geq -\left\|f_{v}^{(r)}\right\|_{2}\cdot \left\|Pf_{v}^{(r)}\right\|_{2}\geq -1.
\]
Therefore, in order for \eqref{eq:opt_at_least} to be close to 1, we need to show that the inner product $\inp{f_{v}^{(r)}}{Pf_{v}^{(r)}}$ is close to $-1$ with high probability over the random vertex $v$ sampled from $\mu$. The next two subsections will be devoted to understanding this inner product. In particular, we will show in~\Cref{subsec:remove_parity} how to relate it to the analytic version of the low-reject-probability assumption derived in~\Cref{cor:even_odd_clash}.

\subsection{Removing Parity}\label{subsec:remove_parity}

In~\Cref{subsec:birthday}, we showed how to use the fact that the number of random walks is $\Omega(\sqrt n)$. To prove~\Cref{thm:main}, we must also exploit the fact that the walks have length $\Omega(\log n)$, which has not yet enter the picture.

Intuitively, once a lazy random walk has run for sufficiently many steps, the distribution of its endpoint becomes approximately stationary, in the sense that taking one additional step changes the distribution only slightly. A natural way to formalize this is to show that, for every non-isolated vertex $v\in[n]$ and every sufficiently large integer $r$, the density functions $W^{r}\delta_v$ and $W^{r+1}\delta_v$ are close in $L^{1}$ distance, or equivalently, that the corresponding endpoint distributions are close in total variation distance. This is the content of the next lemma.

\begin{lemma}\label{lem:Q_vs_PR}
For any positive integer $t$ and any non-isolated vertex $v\in [n]$, we have
\[
\big\|Q_{2t}\delta_{v}-PR_{2t}\delta_{v}\big\|_{1}\leq\sqrt{\frac{2}{t}}\qquad\text{and}\qquad \big\|PQ_{2t}\delta_{v}-R_{2t}\delta_{v}\big\|_{1}\leq\sqrt{\frac{2}{t}}.
\]
\end{lemma}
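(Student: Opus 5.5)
The plan is to expand everything in powers of $P$ via the binomial theorem, bound the $L^1$ norm term by term, and then evaluate the resulting coefficient sum exactly by a symmetry argument.

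\textbf{Step 1: binomial expansion.} Write $b_k=\binom{2t}{k}2^{-2t}$ for $0\le k\le 2t$, and set $b_{-1}=b_{2t+1}=0$. Since $I$ and $P$ commute,
\[
W^{2t}=\sum_k b_kP^k,\qquad S^{2t}=\sum_k(-1)^kb_kP^k .
\]
Hence $Q_{2t}\delta_v=2\sum_{k\text{ even}}b_kP^k\delta_v$ and $R_{2t}\delta_v=2\sum_{k\text{ odd}}b_kP^k\delta_v$.

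\textbf{Step 2: shift indices and take norms.} Substituting $j=k+1$ in $PR_{2t}\delta_v$ gives
\[
Q_{2t}\delta_v-PR_{2t}\delta_v=2\sum_{j\text{ even}}(b_j-b_{j-1})P^j\delta_v .
\]
Similarly,
\[
PQ_{2t}\delta_v-R_{2t}\delta_v=2\sum_{j\text{ odd}}(b_{j-1}-b_j)P^j\delta_v .
\]
Each $P^j\delta_v$ is a probability density, so $\|P^j\delta_v\|_1=1$ by \Cref{prop:distribution_of_endpoint}. The triangle inequality therefore bounds the two norms by $2\sum_{j\text{ even}}|b_j-b_{j-1}|$ and $2\sum_{j\text{ odd}}|b_j-b_{j-1}|$ respectively, with $j$ ranging over $0,\dots,2t+1$.

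\textbf{Step 3: rewrite the differences via $\mathrm{Bin}(2t+1,\tfrac12)$.} This is the step I expect to be the main obstacle. The crude bound via unimodality is too weak: it only gives $2\sum_{\text{even}}|b_j-b_{j-1}|\le 4b_t\approx 2.26/\sqrt t$. A finer identity is needed. Let $c_j=\binom{2t+1}{j}2^{-(2t+1)}$ be the pmf of $X\sim\mathrm{Bin}(2t+1,\frac12)$. A direct computation with $\binom{2t}{j}=\binom{2t+1}{j}\frac{2t+1-j}{2t+1}$ and $\binom{2t}{j-1}=\binom{2t+1}{j}\frac{j}{2t+1}$ gives
\[
b_j-b_{j-1}=\frac{2c_j\,(2t+1-2j)}{2t+1}.
\]
This holds for all $0\le j\le 2t+1$, including the endpoint conventions.

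\textbf{Step 4: symmetry and Cauchy--Schwarz.} The involution $j\mapsto 2t+1-j$ swaps even and odd indices. It preserves both $c_j$ and $|2t+1-2j|$. So the even-index sum and the odd-index sum are equal, each being half of the full sum. Thus
\[
2\sum_{j\text{ even}}|b_j-b_{j-1}|=\frac{2}{2t+1}\,\mathbb{E}\,|2X-(2t+1)|\le\frac{2}{2t+1}\sqrt{\mathrm{Var}(2X)}=\frac{2}{\sqrt{2t+1}},
\]
using Cauchy--Schwarz and $\mathrm{Var}(2X)=2t+1$. Finally $\frac{4}{2t+1}\le\frac{2}{t}$, so both quantities are at most $\sqrt{2/t}$, which proves both inequalities of the lemma.
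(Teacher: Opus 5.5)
Your proof is correct. Its skeleton matches the paper's, but you bound the final binomial sum by a different route.

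The paper treats $Q_{2t}\delta_v$ and $PR_{2t}\delta_v$ as mixtures of the densities $P^j\delta_v$, with mixing laws given by the parity-conditioned binomials. It then bounds their $L^1$ distance by twice the total variation distance between those laws (\Cref{lem:binomial_distributions}). This is exactly your expansion in powers of $P$ followed by the triangle inequality. Its appeal to reflection symmetry for the second inequality is your involution $j\mapsto 2t+1-j$.

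The difference is in how $\sum_{j\text{ even}}|b_j-b_{j-1}|$ is handled. The paper combines unimodality of $b_k$ with $b_k=b_{2t-k}$ and telescopes this sum to the exact value $b_t=2^{-2t}\binom{2t}{t}$. It then asserts the standard central-binomial estimate $b_t\le(2t)^{-1/2}$ without proof.

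Your remark that unimodality is ``too weak'' holds only without the symmetry. Unimodality gives a total sum of $2b_t$ over all $j$, and your Step 4 symmetry splits it evenly between even and odd $j$. That already yields exactly $2b_t$ for both quantities, which is precisely the paper's computation.

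So your Step 3 identity is really replacing the central-binomial estimate, not the unimodality step. Rewriting the differences as $\mathrm{Bin}(2t+1,\tfrac12)$ weights times a linear factor turns the sum into a mean absolute deviation. Cauchy--Schwarz then bounds that by the standard deviation, which in effect proves $b_t\le(2t+1)^{-1/2}$.

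Your route is fully self-contained and needs no Stirling- or Wallis-type input. The paper's route gives the exact distance $2b_t\approx 2/\sqrt{\pi t}$, a sharper constant than your $2/\sqrt{2t+1}$. Neither the sharpness nor the exact value is needed downstream.
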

\begin{proof}
Recall from the proof of~\Cref{lem:even_odd_clash} that $Q_{2t}\delta_{v}\d\mu$ (respectively, $R_{2t}\delta_{v}\d\mu$) is the distribution of the endpoint of the length-$2t$ lazy-random walk starting at $v$ conditioned on the number of non-lazy steps being even (respectively, odd). It follows from~\Cref{lem:binomial_distributions} (and~\Cref{prop:distribution_of_endpoint}) that the total variation distance between $Q_{2t}\delta_{v}\d\mu$ and $PR_{2t}\delta_{v}\d\mu$ is at most $(2t)^{-1/2}$, and hence \(\big\|Q_{2t}\delta_{v}-PR_{2t}\delta_{v}\big\|_{1}\leq \sqrt{2/t}\). Due to the reflection symmetry of the binomial distribution $\mathrm{Bin}(2t,\frac{1}{2})$ about $t$, it also follows from~\Cref{lem:binomial_distributions} that the total variation distance between $PQ_{2t}\delta_{v}\d\mu$ and $R_{2t}\delta_{v}\d\mu$ is at most $(2t)^{-1/2}$, and hence \(\big\|Q_{2t}\delta_{v}-PR_{2t}\delta_{v}\big\|_{1}\leq \sqrt{2/t}\).
\end{proof}

It turns out that~\Cref{lem:Q_vs_PR} does not fully capture the ``mixing'' behavior arising from the $\Omega(\log n)$ walk length; we will return to this point in~\Cref{subsec:mixing}. Nevertheless, the lemma plays an important role in the proof of~\Cref{thm:main}. Its conclusion allows us to relate the inner product \(
    \inp{f_v^{(r)}}{P f_v^{(r)}}
\), which we need to analyze, to the more tractable quantity
\[
    \inp{\sqrt{W^{2t}\delta_v}}{P\sqrt{W^{2t}\delta_v}},
\]
as shown in the next corollary. The latter inner product involves only nonnegative functions, and moreover no longer distinguishes between even- and odd-length random walks.

\begin{corollary}\label{cor:Q_vs_PR}
Fix a positive integer $t$ and a non-isolated vertex $v\in [n]$. If we let $\gamma=2-\big\|Q_{2t}\delta_{v}-R_{2t}\delta_{v}\big\|_{1}$, then we have
\[
\inp{f_{v}^{(2t)}}{Pf_{v}^{(2t)}}+\inp{\sqrt{W^{2t}\delta_{v}}}{P\sqrt{W^{2t}\delta_{v}}}\leq 2\sqrt{\gamma}+2\left(\frac{2}{t}\right)^{1/4}.
\]
\end{corollary}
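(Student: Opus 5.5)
Write $Q=Q_{2t}\delta_v$, $R=R_{2t}\delta_v$, $a=\sqrt{Q/2}$, $b=\sqrt{R/2}$ and $g=\sqrt{W^{2t}\delta_v}=\sqrt{a^2+b^2}$, so that $f_v^{(2t)}=a-b$. All of $a,b,g$ are nonnegative. Since $P$ has nonnegative entries, it maps nonnegative functions to nonnegative functions. The plan is to expand everything into inner products of nonnegative functions, eliminate the cross term $\inp{a}{Pb}$, and then bound the diagonal terms $\inp{a}{Pa}$ and $\inp{b}{Pb}$ by the overlap $\inp{a}{b}$ using \Cref{lem:Q_vs_PR}.

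\textbf{Step 1 (removing the cross term).} By bilinearity and self-adjointness of $P$,
\[
\inp{f_v^{(2t)}}{Pf_v^{(2t)}}=\inp{a}{Pa}+\inp{b}{Pb}-2\inp{a}{Pb}.
\]
Pointwise we have $0\le g\le a+b$, and $P$ preserves nonnegativity. Hence
\[
\inp{g}{Pg}\le\inp{a+b}{P(a+b)}=\inp{a}{Pa}+\inp{b}{Pb}+2\inp{a}{Pb}.
\]
Adding the two lines, the quantity to be bounded is at most $2\inp{a}{Pa}+2\inp{b}{Pb}$.

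\textbf{Step 2 (diagonal terms via \Cref{lem:Q_vs_PR}).} For nonnegative $h$, Cauchy--Schwarz (Jensen) applied to the averaging operator gives $P\sqrt h\le\sqrt{Ph}$ pointwise. Therefore
\[
\inp{a}{Pa}\le\inp{a}{\sqrt{PQ/2}}\le\inp{a}{b}+\|a\|_2\,\big\|\sqrt{PQ/2}-\sqrt{R/2}\big\|_2 .
\]
Next, $\|a\|_2^2=\tfrac12\|Q\|_1=\tfrac12$. The elementary inequality $(\sqrt x-\sqrt y)^2\le|x-y|$ gives $\|\sqrt u-\sqrt w\|_2^2\le\|u-w\|_1$. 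Combined with the second bound of \Cref{lem:Q_vs_PR}, this yields
\[
\big\|\sqrt{PQ/2}-\sqrt{R/2}\big\|_2^2\le\tfrac12\sqrt{2/t}.
\]
Hence $\inp{a}{Pa}\le\inp{a}{b}+\tfrac12(2/t)^{1/4}$. Symmetrically, using $Pb\le\sqrt{PR/2}$ and the first bound of \Cref{lem:Q_vs_PR} (which compares $PR$ with $Q$), we get $\inp{b}{Pb}\le\inp{a}{b}+\tfrac12(2/t)^{1/4}$. Substituting into Step 1, the total is at most $4\inp{a}{b}+2(2/t)^{1/4}$.

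\textbf{Step 3 (overlap bound).} It remains to show $4\inp{a}{b}=2\,\Exu{i\sim\mu}{\sqrt{Q(i)R(i)}}\le 2\sqrt\gamma$. Write $\sqrt{QR}=\sqrt{\min(Q,R)\max(Q,R)}$ and apply Cauchy--Schwarz:
\[
\mathbb{E}\sqrt{QR}\le\sqrt{\mathbb{E}\min(Q,R)\cdot\mathbb{E}\max(Q,R)} .
\]
Since $\|Q\|_1=\|R\|_1=1$, we have $\|Q-R\|_1=2-2\,\mathbb{E}\min(Q,R)$, so $\mathbb{E}\min(Q,R)=\gamma/2$. Also $\mathbb{E}\max(Q,R)\le\mathbb{E}(Q+R)=2$. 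This gives $\mathbb{E}\sqrt{QR}\le\sqrt\gamma$, which completes the bound.

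\textbf{Main obstacle.} The delicate point is Step 1. The cross term $-2\inp{a}{Pb}$ does not look small, and the naive approach of bounding each piece separately fails: it loses an additive constant coming from $\inp{g}{Pg}\approx 1$. Once the pointwise domination $g\le a+b$ is used to cancel this term against $\inp{g}{Pg}$, the remaining estimates are routine Hellinger-type inequalities.
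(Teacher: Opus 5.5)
Your proof is correct and is essentially the paper's argument: you cancel the cross term the same way, using $\sqrt{a^2+b^2}\le a+b$ and positivity of $P$. You then use the same Jensen step $P\sqrt{h}\le\sqrt{Ph}$ with \Cref{lem:Q_vs_PR} and the same min/max Cauchy--Schwarz overlap bound, differing only in reproving inline the two inequalities the paper packages as \Cref{lem:norm_ineq}.
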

\begin{proof}
In this proof, we write $g=\frac{1}{2}Q_{2t}\delta_{v}$ and $h=\frac{1}{2}R_{2t}\delta_{v}$. Then by~\Cref{def:Q_r_and_R_r,def:f_r} we have
\begin{equation}\label{eq:root_g-root_h}
\inp{f_{v}^{(2t)}}{Pf_{v}^{(2t)}}=\inp{\sqrt{g}-\sqrt{h}}{P\left(\sqrt{g}-\sqrt{h}\right)}
\end{equation}
and
\begin{equation}\label{eq:root_g+root_h}
\inp{\sqrt{W^{2t}\delta_{v}}}{P\sqrt{W^{2t}\delta_{v}}}=\inp{\sqrt{g+h}}{P\sqrt{g+h}}\leq \inp{\sqrt{g}+\sqrt{h}}{P\left(\sqrt{g}+\sqrt{h}\right)}.
\end{equation}
Adding \eqref{eq:root_g-root_h} and \eqref{eq:root_g+root_h} together yields
\[
\inp{f_{v}^{(2t)}}{Pf_{v}^{(2t)}}+\inp{\sqrt{W^{2t}\delta_{v}}}{P\sqrt{W^{2t}\delta_{v}}}\leq 2\inp{\sqrt{g}}{P\sqrt{g}}+2\inp{\sqrt{h}}{P\sqrt{h}}.
\]
Therefore, it suffices to show that both $\inp{\sqrt{g}}{P\sqrt{g}}$ and $\inp{\sqrt{h}}{P\sqrt{h}}$ are at most $\left(\sqrt{\gamma}+(2/t)^{1/4}\right)/2$. We next prove $\inp{\sqrt{g}}{P\sqrt{g}}\leq \left(\sqrt{\gamma}+(2/t)^{1/4}\right)/2$; the other one can be proved in a similar way.

First note that by~\Cref{def:Markov_operator}, for any $i\in [n]$ we have
\begin{equation}\label{eq:P_rootg_root_Pg}
P\sqrt{g}\,(i)=\frac{\sum_{j=1}^{n}A_{ij}\sqrt{g(j)}}{\sum_{j=1}^{n}A_{ij}}\leq \sqrt{\frac{\sum_{j=1}^{n}A_{ij}\,g(j)}{\sum_{j=1}^{n}A_{ij}}}=\sqrt{Pg}\,(i),
\end{equation}
where we used Cauchy-Schwarz in the second transition. Therefore, we have
\begin{align*}
\inp{\sqrt{g}}{P\sqrt{g}}&\leq \inp{\sqrt{g}}{\sqrt{Pg}}\tag{using \eqref{eq:P_rootg_root_Pg}}\\
&\leq \inp{\sqrt{g}}{\sqrt{h}}+\big\|\sqrt{g}\big\|
_{2}\cdot\left\|\sqrt{h}-\sqrt{Pg}\right\|_{2}\tag{using Cauchy-Schwarz}\\
&\leq \frac{1}{2}\inp{\sqrt{2g}}{\sqrt{2h}}+\frac{1}{\sqrt{2}}\cdot \sqrt{\big\|h-Pg\big\|_{1}}\tag{using the fact that $2g$ is a probability density function and~\Cref{lem:norm_ineq}(1)}\\
&\leq \frac{1}{2}\sqrt{2-\big\|2g-2h\big\|_{1}}+\frac{1}{2}\sqrt{\big\|2h-P(2g)\big\|_{1}}\tag{using~\Cref{lem:norm_ineq}(2)}\\
&\leq \frac{1}{2}\sqrt{\gamma}+\frac{1}{2}\left(\frac{2}{t}\right)^{1/4},\tag{using the definition of $\gamma$ and~\Cref{lem:Q_vs_PR}}
\end{align*}
as desired.
\end{proof}

\subsection{An Interpretation of Mixing}\label{subsec:mixing}

As discussed in~\Cref{subsec:remove_parity}, we need to understand what is special about walks of length $\Omega(\log n)$. The benefit of sufficiently long walks is already partially reflected in~\Cref{lem:Q_vs_PR}, but the applications of that lemma only require the walk length to be at least a constant depending on $\varepsilon$. Intuitively speaking, the correct threshold $\Theta(\log n)$ should correspond to the point at which random walks begin to exhibit a form of ``mixing.''

Since we do not assume that $G$ is an expander, we cannot expect the endpoint distribution of a $\Theta(\log n)$-step random walk to be close to the global stationary distribution $\mu$. Instead, we track the distance of the endpoint distribution from $\mu$. For many natural notions of distance, this distance is nonincreasing with the walk length.\footnote{A non-increasing infinite sequence of nonnegative numbers must converge to a limit. Here, the limit of the distance to $\mu$ (as walk length tends to infinity) may not be zero if the graph $G$ is not connected.} Thus, in this setting, ``mixing'' can be interpreted as the stabilization of this distance rather than convergence to stationarity.

For our purposes, the most suitable notion of distance is relative entropy,\footnote{The original analysis of Goldreich and Ron~\cite{goldreich1999sublinear} uses a similar idea to find a local bipartition of a subset of vertices, but uses the $L^{2}$ distance to capture the relevant mixing behavior.} which is featured in the proof of the following key lemma.

\begin{lemma}\label{lem:JS_Hellinger}
For any non-isolated vertex $v\in [n]$, we have
\[
\sum_{r=0}^{+\infty}\left(1-\inp{\sqrt{W^{r}\delta_{v}}}{P\sqrt{W^{r}\delta_{v}}}\right)\leq \log_{2}\frac{1}{\mu(\{v\})}.
\]
\end{lemma}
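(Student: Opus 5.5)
The plan is to track the relative entropy of the lazy-walk endpoint distribution with respect to $\mu$, measured in bits, and to show that each lazy step decreases it by at least the summand $1-\inp{\sqrt{W^{r}\delta_{v}}}{P\sqrt{W^{r}\delta_{v}}}$. For a probability density function $f$ with respect to $\mu$, define
\[
D(f)=\Exu{i\sim\mu}{f(i)\log_{2}f(i)}=\Exu{i\sim\mu}{\varphi(f(i))},\qquad \varphi(x)=x\log_{2}x,
\]
with $\varphi(0)=0$. This is the relative entropy of $f\d\mu$ from $\mu$, so $D(f)\ge 0$. For the starting density we have $D(\delta_{v})=\mu(\{v\})\cdot\mu(\{v\})^{-1}\log_{2}\mu(\{v\})^{-1}=\log_{2}\frac{1}{\mu(\{v\})}$. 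Now set $f_{r}=W^{r}\delta_{v}$, which is a probability density function. The lemma then follows by telescoping once we prove the one-step inequality
\[
D(f)-D(Wf)\;\ge\;1-\inp{\sqrt{f}}{P\sqrt{f}}
\]
for every density $f$. Indeed, summing over $r=0,\dots,N$ bounds the partial sums by $D(f_{0})-D(f_{N+1})\le\log_{2}\frac{1}{\mu(\{v\})}$. Each summand is nonnegative, because $\inp{\sqrt f}{P\sqrt f}\le\|\sqrt f\|_{2}^{2}=1$, so the infinite series converges with the same bound.

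To prove the one-step inequality, introduce the symmetric edge distribution $\pi(i,j)=A_{ij}/(2|E|)$ on ordered pairs. Both marginals of $\pi$ equal $\mu$, and the conditional law of $j$ given $i$ is one step of $P$. Then $Wf(i)=\mathbb{E}_{j\mid i}\bigl[\tfrac{f(i)+f(j)}{2}\bigr]$, so Jensen applied to the convex function $\varphi$ gives
\[
D(Wf)\le \mathbb{E}_{(i,j)\sim\pi}\Bigl[\varphi\bigl(\tfrac{f(i)+f(j)}{2}\bigr)\Bigr].
\]
Meanwhile, by symmetry of the marginals, $D(f)=\mathbb{E}_{\pi}\bigl[\tfrac{\varphi(f(i))+\varphi(f(j))}{2}\bigr]$. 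Similarly,
\[
\mathbb{E}_{\pi}\Bigl[\tfrac12\bigl(\sqrt{f(i)}-\sqrt{f(j)}\bigr)^{2}\Bigr]=\|\sqrt f\|_{2}^{2}-\inp{\sqrt f}{P\sqrt f}=1-\inp{\sqrt f}{P\sqrt f},
\]
using $\|\sqrt f\|_{2}^{2}=\|f\|_{1}=1$. It therefore suffices to prove the pointwise scalar inequality, for all $a,b\ge 0$,
\[
\frac{\varphi(a)+\varphi(b)}{2}-\varphi\Bigl(\frac{a+b}{2}\Bigr)\;\ge\;\frac{(\sqrt a-\sqrt b)^{2}}{2}.
\]

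For the scalar inequality, write $s=a+b>0$ and $p=a/s$. The left-hand side simplifies to $\frac{s}{2}\bigl(1-h(p)\bigr)$, where $h(p)=-p\log_{2}p-(1-p)\log_{2}(1-p)$ is the binary entropy. The right-hand side equals $\frac{s}{2}\bigl(1-2\sqrt{p(1-p)}\bigr)$. So the claim reduces to the classical bound $h(p)\le 2\sqrt{p(1-p)}$, which is the two-point case of "Jensen–Shannon divergence dominates squared Hellinger distance." I expect this elementary inequality to be the only genuinely delicate step. I would prove it by substituting $p=\frac{1+x}{2}$ and comparing Taylor expansions in $x^{2}$ around $x=0$: $1-h$ has coefficients $\frac{1}{2k(2k-1)\ln 2}$, while $1-\sqrt{1-x^{2}}$ has coefficients $\binom{2k}{k}\frac{1}{4^{k}(2k-1)}$. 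It then suffices to compare these termwise, checking $\frac{1}{2k\ln 2}\ge\binom{2k}{k}4^{-k}$, which holds because $\binom{2k}{k}4^{-k}\le 1/\sqrt{\pi k}$. If the termwise check turns out to fail for small $k$, the fallback is a direct calculus argument on $[0,1/2]$, or citing the inequality from the literature. Everything else is routine bookkeeping with $\pi$.
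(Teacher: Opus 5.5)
Up to the scalar inequality, your argument is the paper's argument and is correct: the relative-entropy potential with $\mathrm{Ent}_\mu(\delta_v)=\log_2\frac{1}{\mu(\{v\})}$, Jensen over the edge distribution, the two-point inequality $\varphi(a)+\varphi(b)-2\varphi(\frac{a+b}{2})\ge(\sqrt a-\sqrt b)^2$, and telescoping. The gap is in your proof of $h(p)\le 2\sqrt{p(1-p)}$.

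\textbf{The termwise Taylor comparison fails.} It is false for every $k\ge 2$, not just for a few small $k$. At $k=2$ you would need $\frac{1}{4\ln 2}\approx 0.361\ge\binom{4}{2}4^{-2}=0.375$. For large $k$, the left side decays like $\frac{1}{2k\ln 2}$, while $\binom{2k}{k}4^{-k}$ decays only like $\frac{1}{\sqrt{\pi k}}$. Your appeal to $\binom{2k}{k}4^{-k}\le\frac{1}{\sqrt{\pi k}}$ points the wrong way and only settles $k=1$.

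No sharper binomial estimate can rescue this approach. At $x=1$ (i.e.\ $b=0$) both series converge and both sum to exactly $1$, because $\sum_{k\ge 1}\frac{1}{2k(2k-1)}=\ln 2$, so the inequality is an equality there. Since the $x^{2}$ coefficient is strictly larger on the left ($\frac{1}{2\ln 2}>\frac12$), some later coefficient must be strictly smaller on the left. Any valid proof therefore has to be global and account for the tightness at $x=1$.

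\textbf{How to fix it.} Your "direct calculus" fallback works if you carry it out. Let $F(x)=\frac{(1+x)\ln(1+x)+(1-x)\ln(1-x)}{2\ln 2}-\bigl(1-\sqrt{1-x^{2}}\bigr)$ on $[0,1]$, which is $(1-h(p))-(1-2\sqrt{p(1-p)})$ with $p=\frac{1+x}{2}$. Then:
\begin{itemize}
\item $F(0)=F'(0)=0$ and $F(1)=0$.
\item $F''(x)=(1-x^{2})^{-3/2}\bigl(\frac{\sqrt{1-x^{2}}}{\ln 2}-1\bigr)$, which is positive for $x<x_0:=\sqrt{1-(\ln 2)^{2}}$ and negative for $x>x_0$.
\item So $F$ is increasing, hence nonnegative, on $[0,x_0]$.
\item On $[x_0,1]$, $F$ is concave (using continuity at $x=1$) with nonnegative values at both endpoints, so $F\ge 0$ there as well.
\end{itemize}
The paper reaches the same conclusion differently. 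It applies a monotone-ratio (L'H\^opital-type) lemma twice to show that $(1-h)/(1-\sqrt{1-x^{2}})$ is nonincreasing on $(0,1)$, so it is bounded below by its limit $1$ as $x\to 1$.
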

\begin{proof}
For any nonnegative function $g:[n]\rightarrow [0,+\infty)$, define its relative entropy to be 
\[\mathrm{Ent}_{\mu}(g):=\Exu{i\sim\mu}{g(i)\log_{2}g(i)\big.}.\]
For any probability density function $g$, we have $\mathrm{Ent}_{\mu}(g)\geq 0$.\footnote{This is due to Jensen's inequality and the convexity of the function $x\mapsto x\log_{2}x$ on $[0,+\infty)$} The goal is to prove that for any probability density function $g:[n]\rightarrow [0,+\infty)$, we have\footnote{Since $\|\sqrt{g}\|_{2}=1$ and $P$ contracts $L^{2}$-norm, the left-hand side of \eqref{eq:JS_Hellinger} is always nonnegative.}
\begin{equation}\label{eq:JS_Hellinger}
1-\inp{\sqrt{g}}{P\sqrt{g}}\leq \mathrm{Ent}_{\mu}(g)-\mathrm{Ent}_{\mu}(Wg).
\end{equation}
Once we have \eqref{eq:JS_Hellinger}, we can replace $g$ with $W^{r}\delta_{v}$ and sum over all nonnegative integers $r$, yielding
\[
\sum_{r=0}^{+\infty}\left(1-\inp{\sqrt{W^{r}\delta_{v}}}{P\sqrt{W^{r}\delta_{v}}}\right)\leq \sum_{r=0}^{+\infty}\Big(\mathrm{Ent}_{\mu}\big(W^{r}\delta_{v}\big)-\mathrm{Ent}_{\mu}\big(W^{r+1}\delta_{v}\big)\Big)\leq \mathrm{Ent}_{\mu}(\delta_{v})=\log_{2}\frac{1}{\mu(\{v\})}.
\]

It remains to prove that \eqref{eq:JS_Hellinger} holds for any probability density function $g$.\footnote{The inequality \eqref{eq:JS_Hellinger} is standard in Markov chain theory, with important consequences such as the implication from log-Sobolev inequalities to entropy contraction. Nevertheless, for the sake of completeness, we choose provide a full proof of~\eqref{eq:JS_Hellinger}.} We denote $\varphi:[0,+\infty)\rightarrow\bR$ to be the convex function $\varphi(x)=x\log_{2}x$. For any $i\in [n]$, by~\Cref{def:Markov_operator,def:Q_r_and_R_r} we have
\[
\varphi\big(Wg(i)\big)=\varphi\left(\frac{\sum_{j=1}^{n}A_{ij}(g(i)+g(j))/2}{\sum_{j=1}^{n}A_{ij}}\right)\leq \frac{\sum_{j=1}^{n}A_{ij}\,\varphi\big((g(i)+g(j))/2\big)}{\sum_{j=1}^{n}A_{ij}}.
\]
Now we apply~\Cref{lem:two_point_ineq} to the numerator and get
\[
\varphi\big(Wg(i)\big)\leq \frac{1}{\sum_{j=1}^{n}A_{ij}}\cdot \sum_{j=1}^{n}A_{ij}\left(\frac{1}{2}\varphi\big(g(i)\big)+\frac{1}{2}\varphi\big(g(j)\big)+\sqrt{g(i)g(j)}-\frac{1}{2}g(i)-\frac{1}{2}g(j)\right).
\]
Taking expectation over $i\sim \mu$ yields
\[
\big\|\varphi\circ (Wg)\big\|_{1}\leq \frac{1}{2}\big\|\varphi\circ g\big\|_{1}+\frac{1}{2}\big\|P(\varphi\circ g)\big\|_{1}+\inp{\sqrt{g}}{P\sqrt{g}}-\frac{1}{2}\big\|g\big\|_{1}-\frac{1}{2}\big\|Pg\big\|_{1}
\]
Since the operator $P$ preserves $L^{1}$-norm of nonnegative functions, the above can be rewritten as
\[
\mathrm{Ent}_{\mu}(Wg)\leq \mathrm{Ent}_{\mu}(g)+\inp{\sqrt{g}}{P\sqrt{g}}-1,
\]
which is exactly the desired inequality \eqref{eq:JS_Hellinger}.
\end{proof}

The importance of having walks of length $\Omega(\log n)$ finally becomes clear in the following corollary: if $\mu(\{v\})=1/\mathrm{poly}(n)$, we need $t=\Omega(\log n)$ to make the right-hand side of~\eqref{eq:importance_log_n} sufficiently small. 

\begin{corollary}\label{cor:JS_Hellinger}
Fix a positive integer $t$ and a non-isolated vertex $v\in [n]$. If we let $\gamma=2-\big\|Q_{4t}\delta_{v}-R_{4t}\delta_{v}\big\|_{1}$, then we have
\begin{equation}\label{eq:importance_log_n}
\frac{1}{t}\sum_{r=t}^{2t-1}\left(\inp{f_{v}^{(2r)}}{Pf_{v}^{(2r)}}+1\right)\leq 2\sqrt{\gamma}+2\left(\frac{2}{t}\right)^{1/4}+\frac{1}{t}\log_{2}\frac{1}{\mu(\{v\})}.
\end{equation}
\end{corollary}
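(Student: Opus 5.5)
The plan is to apply \Cref{cor:Q_vs_PR} separately for each $r\in\{t,t+1,\dots,2t-1\}$, with $r$ in place of $t$, and then average the resulting bounds over $r$. Writing $\gamma_r=2-\|Q_{2r}\delta_v-R_{2r}\delta_v\|_1$, \Cref{cor:Q_vs_PR} gives
\[
\inp{f_{v}^{(2r)}}{Pf_{v}^{(2r)}}+1\leq 2\sqrt{\gamma_r}+2\left(\frac{2}{r}\right)^{1/4}+\left(1-\inp{\sqrt{W^{2r}\delta_{v}}}{P\sqrt{W^{2r}\delta_{v}}}\right).
\]
Three things then remain: replace $\gamma_r$ by the single quantity $\gamma$, replace $(2/r)^{1/4}$ by $(2/t)^{1/4}$, and control the sum of the last bracket.

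The second replacement is immediate, since $r\ge t$. For the first I use monotonicity. Since $Q_s-R_s=2S^s$, we have $\|Q_s\delta_v-R_s\delta_v\|_1=2\|S^s\delta_v\|_1$. The footnote to \Cref{cor:even_odd_clash} notes that $S$ contracts the $L^1$-norm, so $\|S^{s}\delta_v\|_1$ is nonincreasing in $s$. Every $r\le 2t-1$ satisfies $2r\le 4t$, hence $\|S^{2r}\delta_v\|_1\ge\|S^{4t}\delta_v\|_1$, i.e. $\gamma_r\le\gamma$. This step is the only point needing care, because the statement is phrased in terms of $\gamma$ at walk length $4t$ rather than at each $2r$. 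Consequently each summand is at most $2\sqrt{\gamma}+2(2/t)^{1/4}+\bigl(1-\inp{\sqrt{W^{2r}\delta_{v}}}{P\sqrt{W^{2r}\delta_{v}}}\bigr)$.

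Finally, I average over the $t$ values of $r$. The first two terms reproduce $2\sqrt{\gamma}+2(2/t)^{1/4}$. For the third, each term $1-\inp{\sqrt{W^{s}\delta_{v}}}{P\sqrt{W^{s}\delta_{v}}}$ is nonnegative: $\|\sqrt{W^s\delta_v}\|_2=1$ and $P$ contracts the $L^2$-norm, as noted in the footnote to \eqref{eq:JS_Hellinger}. Hence the sum over the even indices $s=2r$ with $t\le r\le 2t-1$ is at most the full sum over all $s\ge 0$. By \Cref{lem:JS_Hellinger} that full sum is at most $\log_2(1/\mu(\{v\}))$. Dividing by $t$ gives \eqref{eq:importance_log_n}.
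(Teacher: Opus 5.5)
Your proof is correct and follows the paper's argument essentially verbatim. It uses the monotonicity $\|S^{2r}\delta_v\|_1\ge\|S^{4t}\delta_v\|_1$ to pass from $\gamma_r$ to $\gamma$, then averages \Cref{cor:Q_vs_PR} over $r$, then applies \Cref{lem:JS_Hellinger} restricted to the even indices. You also make explicit two points the paper leaves implicit: $(2/r)^{1/4}\le(2/t)^{1/4}$, and the nonnegativity of each term in the entropy sum.
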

\begin{proof}
For any $r\in\{t,t+1,\dots,2t-1\}$, due to the fact that the operator $S=\frac{1}{2}(I-P)$ contracts $L^{1}$-norm, we have
\[
\big\|Q_{2r}\delta_{v}-R_{2r}\delta_{v}\big\|_{1}=2\big\|S^{2r}\delta_{v}\big\|_{1}\geq 2\big\|S^{4t}\delta_{v}\big\|_{1}=\big\|Q_{4t}\delta_{v}-R_{4t}\delta_{v}\big\|_{1}=2-\gamma.
\]
It then follows from~\Cref{cor:Q_vs_PR} that
\begin{equation}\label{eq:average_first_part}
\frac{1}{t}\sum_{r=t}^{2t-1}\left(\inp{f_{v}^{(2r)}}{Pf_{v}^{(2r)}}+\inp{\sqrt{W^{2r}\delta_{v}}}{P\sqrt{W^{2r}\delta_{v}}}\right)\leq 2\sqrt{\gamma}+2\left(\frac{2}{t}\right)^{1/4}.
\end{equation}
On the other hand,~\Cref{lem:JS_Hellinger} clearly implies
\begin{equation}\label{eq:average_second_part}
\frac{1}{t}\sum_{r=t}^{2t-1}\left(1-\inp{\sqrt{W^{2r}\delta_{v}}}{P\sqrt{W^{2r}\delta_{v}}}\right)\leq \frac{1}{t}\log_{2}\frac{1}{\mu(\{v\})}.
\end{equation}
Adding \eqref{eq:average_first_part} and \eqref{eq:average_second_part} together yields the conclusion.
\end{proof}

\subsection{Concluding the Proof}\label{subsec:conclude}

The following theorem summarizes what we have proved in~\Cref{subsec:birthday,subsec:solution_SDP,subsec:remove_parity,subsec:mixing}.

\begin{theorem}\label{thm:SDP}
Fix an arbitrary parameter $\varepsilon\in (0,1)$. Let $t$ be an integer such that $t\geq 2\varepsilon^{-2}\log_{2}n$. If $\textsc{BipTest}(G,4t,\varepsilon)$ (\Cref{alg:main}) rejects with probability at most $2/3$, then $\textsc{BasicSDP}_{A}$ has value at least $1-3\sqrt{\varepsilon}$.
\end{theorem}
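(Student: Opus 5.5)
The plan is to combine \Cref{cor:even_odd_clash}, \Cref{cor:approximate_solution} and \Cref{cor:JS_Hellinger}, averaging over both the starting vertex $v\sim\mu$ and the walk parameter $r\in\{t,\dots,2t-1\}$.

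First, apply \Cref{cor:even_odd_clash} with $m=4t$. The hypothesis that $\textsc{BipTest}(G,4t,\varepsilon)$ rejects with probability at most $2/3$ gives
\[
\Exu{v\sim\mu}{\gamma_v}\le 2\varepsilon, \qquad \text{where } \gamma_v=2-\|Q_{4t}\delta_v-R_{4t}\delta_v\|_1\ge 0 .
\]

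Second, note that the infimum in \Cref{cor:approximate_solution} is at most any average over $r$. Hence the SDP value is at least
\[
1-\frac12\cdot\frac1t\sum_{r=t}^{2t-1}\Exu{v\sim\mu}{\inp{f_v^{(2r)}}{Pf_v^{(2r)}}+1}.
\]
Swap the expectation and the sum, then apply \Cref{cor:JS_Hellinger} pointwise in $v$. This bounds the quantity inside by
\[
\Exu{v\sim\mu}{2\sqrt{\gamma_v}}+2(2/t)^{1/4}+\frac1t\Exu{v\sim\mu}{\log_2\frac{1}{\mu(\{v\})}}.
\]

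Third, bound the three terms separately.
\begin{itemize}
\item By Jensen (concavity of the square root), $\Exu{v\sim\mu}{\sqrt{\gamma_v}}\le\sqrt{2\varepsilon}$, so the first term is at most $2\sqrt{2\varepsilon}$.
\item The last expectation is the Shannon entropy of $\mu$, which is at most $\log_2 n$. With $t\ge 2\varepsilon^{-2}\log_2 n$, the last term is at most $\varepsilon^2/2$.
\item Since $n\ge2$, we have $t\ge 2\varepsilon^{-2}$, so $(2/t)^{1/4}\le \varepsilon^{1/2}$ and the middle term is at most $2\sqrt{\varepsilon}$.
\end{itemize}
Halving the total gives a loss of at most $\sqrt{2\varepsilon}+\sqrt{\varepsilon}+\varepsilon^2/4$.

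The main obstacle is this final constant bookkeeping. The sum is about $2.414\sqrt\varepsilon+\varepsilon^2/4$, and $\varepsilon^2/4\le\varepsilon^{1/2}/4$ for $\varepsilon<1$, so the total stays below $3\sqrt\varepsilon$. This requires using $(2/t)^{1/4}\le(\varepsilon^2)^{1/4}=\sqrt\varepsilon$ exactly as written, and it has some slack; if a tighter estimate were ever needed, I would instead use $t\ge 2\varepsilon^{-2}\log_2 n$ more aggressively. Everything else is direct substitution of earlier results.
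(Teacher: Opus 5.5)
Your proposal is correct and follows essentially the same route as the paper. Both arguments combine \Cref{cor:even_odd_clash} with $m=4t$, concavity of the square root (the paper calls it Cauchy--Schwarz), \Cref{cor:JS_Hellinger} averaged over $r\in\{t,\dots,2t-1\}$ and $v\sim\mu$, the bound $\log_2 n$ on the entropy of $\mu$, and \Cref{cor:approximate_solution}; your final bookkeeping ($\sqrt2+1+\tfrac14<3$) is just the paper's $2\sqrt2+2+\tfrac12\le 6$ after halving.
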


\begin{proof}
Suppose $\textsc{BipTest}(G,4t,\varepsilon)$ (\Cref{alg:main}) rejects with probability at most $2/3$. By \Cref{cor:even_odd_clash}, it follows that
\[
\Exu{v\sim \mu}{2-\left\|Q_{4t}\delta_v-R_{4t}\delta_v\big.\right\|_1\Big.}\leq 2\varepsilon.
\]
Applying Cauchy-Schwarz yields
\[
\Exu{v\sim \mu}{\sqrt{2-\left\|Q_{4t}\delta_v-R_{4t}\delta_v\big.\right\|_1}}\leq \sqrt{2\varepsilon}.
\]
It then follows from~\Cref{cor:JS_Hellinger} that
\begin{align*}
\frac{1}{t}\sum_{r=t}^{2t-1}\Exu{v\sim \mu}{\inp{f_{v}^{(2r)}}{Pf_{v}^{(2r)}}+1}&\leq 2\sqrt{2\varepsilon}+2\left(\frac{2}{t}\right)^{1/4}+\frac{1}{t}\Exu{v\sim \mu}{\log_{2}\frac{1}{\mu(\{v\})}}\\
&\leq 2\sqrt{2\varepsilon}+2\left(\frac{2}{t}\right)^{1/4}+\frac{1}{t}\log_{2}\left(\Exu{v\sim \mu}{\frac{1}{\mu(\{v\})}}\right)\\
&=2\sqrt{2\varepsilon}+2\left(\frac{2}{t}\right)^{1/4}+\frac{\log_{2}n}{t}\\
&\leq 2\sqrt{2\varepsilon}+2\sqrt{\varepsilon}+\frac{\varepsilon^{2}}{2}\leq 6\sqrt{\varepsilon}.
\end{align*}
Therefore, there exists an $r\in \{t,t+1,\dots,2t\}$ such that
\[
\Exu{v\sim \mu}{\inp{f_{v}^{(2r)}}{Pf_{v}^{(2r)}}+1}\leq 6\sqrt{\varepsilon},
\]
and hence the desired conclusion follows from~\Cref{cor:approximate_solution}.
\end{proof}

We are now ready to conclude the proof of~\Cref{thm:main}.

\begin{proof}[Proof of~\Cref{thm:main}]
Apply~\Cref{thm:SDP} to conclude that the value of $\textsc{BasicSDP}_{A}$ is at least $1-\varepsilon^{2}$. It then follows from~\Cref{thm:GW} that the the Max-Cut value of $G$ is at least $1-\varepsilon$.  
\end{proof}

\clearpage
\addcontentsline{toc}{section}{References}
\bibliographystyle{alpha}
\bibliography{reference}

\appendix
\section{Some Basic Lemmas}

In this appendix, we collect some basic lemmas used in~\Cref{sec:proof_main} and present their proofs.

\begin{lemma}[Birthday paradox]\label{lem:birthday}
Fix parameters $\varepsilon,\delta\in (0,1)$. Let $p^{(0)},p^{(1)}\in \bR^{n}$ be two vectors that satisfy
\begin{enumerate}[label=(\arabic*)]
\item $p^{(0)}_{i},p^{(1)}_{i}\geq 0$ for any $i\in [n]$,
\item $\sum_{i=1}^{n}p^{(0)}_{i}=\sum_{i=1}^{n}p_{i}^{(1)}=1$, and
\item $\sum_{i=1}^{n}\left|p^{(0)}_{i}-p^{(1)}_{i}\right|\leq 2-\varepsilon$.
\end{enumerate}
Let $\nu$ be the probability distribution on $[n]\times \{0,1\}$ under which the probability of each pair $(i,j)\in [n]\times \{0,1\}$ equals $p^{(j)}_{i}/2$. Then, in \(64\left\lceil
(4\varepsilon)^{-1}\log(2/\delta)\sqrt{n}
\right\rceil\) independent samples from $\nu$, with probability at least $1-\delta$ there exist two sampled pairs $(i,0)$ and $(i,1)$ with the same first coordinate $i\in [n]$ but different second coordinates.
\end{lemma}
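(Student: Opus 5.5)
The plan is a standard second-moment / Poissonization argument on the number of ``colliding'' pairs. First I would reduce the total-variation hypothesis to an overlap bound. Write $p^{(0)}\wedge p^{(1)}$ for the entrywise minimum. By condition (3),
\[
\sum_{i}\min\{p^{(0)}_i,p^{(1)}_i\}=1-\tfrac12\sum_i|p^{(0)}_i-p^{(1)}_i|\geq \varepsilon/2 .
\]
Set $q_i=\min\{p^{(0)}_i,p^{(1)}_i\}/2$. Then a sample from $\nu$ equals $(i,0)$ with probability at least $q_i$, and equals $(i,1)$ with probability at least $q_i$. The total mass satisfies $Q:=\sum_i q_i\geq \varepsilon/4$.

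Next I would split the $N=64\lceil(4\varepsilon)^{-1}\log(2/\delta)\sqrt n\rceil$ samples into $T=\lceil\log(2/\delta)\rceil$ (roughly) independent blocks. Each block has $M\geq 64\sqrt n/(4\varepsilon)\geq 16\sqrt n/Q'$ samples, where $Q'=4\varepsilon$ scales with $Q$. It then suffices to show that each block contains a collision with constant probability, say at least $1/2$. Independence across blocks then gives failure probability at most $2^{-T}\leq\delta$, up to adjusting constants with $e^{-T}$ versus $2^{-T}$.

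Within one block, I would count $Z=\sum_{a<b}\mathbf 1[\text{samples }a,b\text{ form }(i,0),(i,1)\text{ for some }i]$, restricted to the coupled events of mass $q_i$. Concretely, I would realize each sample as first deciding, for each $i$, whether it falls into the ``shared'' parts $q_i$ labeled $0$ or $1$. Then
\[
\mathbb E Z\geq \binom M2\cdot 2\sum_i q_i^2\geq \binom M2\cdot 2Q^2/n
\]
by Cauchy--Schwarz, which is a large constant given $M\gtrsim\sqrt n/Q$. For the variance, pairs sharing no sample are independent. Pairs sharing one sample contribute $O(M^3\sum_i q_i^3)\leq O(M^3 (\sum q_i^2)^{3/2})$, which is $O((\mathbb E Z)^{3/2})$. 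Chebyshev (or the Paley--Zygmund bound $\Pr[Z>0]\geq(\mathbb EZ)^2/\mathbb E Z^2$) then gives $\Pr[Z>0]\geq 1/2$ once $\mathbb EZ$ exceeds a suitable constant.

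The main obstacle is the variance bound. In particular, the terms $\sum_i q_i^3$ must be controlled relative to $(\sum_i q_i^2)^{3/2}$ using the monotonicity of $\ell^p$ norms. I must also check that the constant $64$ is large enough to make $\mathbb EZ$ exceed the Chebyshev threshold; Poissonizing the block size is a fallback to simplify the independence bookkeeping.
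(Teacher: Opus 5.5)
Your reduction to the overlap masses $q_i=\min\{p^{(0)}_i,p^{(1)}_i\}/2$, with $Q=\sum_i q_i\ge\varepsilon/4$, is the whole of the paper's written proof. The paper then simply invokes \cite[Lemma 5.1]{fei2026testing}, or ``any of the standard birthday-paradox arguments.'' Supplying that step yourself with a second-moment bound on the shared-part collision count $Z$ is a legitimate route, and your structural estimates are right: $\mathbb{E}Z=M(M-1)\sum_i q_i^2\ge M(M-1)Q^2/n$, and the covariances of pairs sharing one sample give $\mathrm{Var}(Z)\le\mathbb{E}Z+2(\mathbb{E}Z)^{3/2}$. However, the statement fixes the constant $64$, so the numerical closing is the real content. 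As sketched, it fails at exactly the points you deferred to ``adjusting constants.''

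\begin{enumerate}[label=(\arabic*)]
\item A block of $M\approx16\sqrt n/\varepsilon$ samples gives only $MQ\ge M\varepsilon/4\ge 4\sqrt n$, hence only $\mathbb{E}Z\approx16$. The relevant relation is $Q\ge\varepsilon/4$, not $Q'=4\varepsilon$. Chebyshev with your variance bound then gives $\Pr[Z=0]\le 1/16+2/4>1/2$, so it does not deliver even the per-block success $1/2$ you claim.
\item With $T=\lceil\log(2/\delta)\rceil$ blocks and $\log$ natural, per-block success $1/2$ would give failure $2^{-T}$, which exceeds $\delta$ for small $\delta$. For example, $\delta=10^{-3}$ gives $T=8$ and $2^{-8}>10^{-3}$. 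So you need per-block failure at most about $1/e$.
\item $N/\lceil\log(2/\delta)\rceil$ can be close to half of $16\sqrt n/\varepsilon$, so even $\mathbb{E}Z\approx 16$ is not guaranteed in every block.
\end{enumerate}

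The repair is to use Paley--Zygmund rather than Chebyshev. It gives $\Pr[Z=0]\le(1+2s)/(1+s)^2$ with $s=\sqrt{\mathbb{E}Z}$, which equals $9/25<1/e$ at $s=4$, but only just. You must then check $\bigl((1+2s)/(1+s)^2\bigr)^T\le\delta$ directly, with $s\approx 4\log(2/\delta)/T$ to absorb the rounding. Taking $T\approx 2\log(2/\delta)$ blocks, so $s\approx 2$, leaves more room, since $(5/9)^{2\log(2/\delta)}\le\delta/2$.

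Alternatively, you can Poissonize, with $\mathrm{Poi}(\lambda)$ samples and $\lambda\approx N/2$. The no-collision probability is then at most $\exp\bigl(-\sum_i\phi(\lambda q_i)\bigr)$, where $\phi(y)=y-\ln(2-e^{-y})$ is convex with $\phi(0)=0$. Jensen gives $\sum_i\phi(\lambda q_i)\ge n\phi(\lambda Q/n)$, which already exceeds $\log(2/\delta)$. This handles every $\delta$ at once without any block splitting.
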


\begin{proof}
The given conditions imply
\[
\sum_{i=1}^{n}\min\left\{\frac{1}{2}p^{(0)}_{i},\frac{1}{2}p^{(1)}_{i}\right\}=\sum_{i=1}^{n}\frac{1}{4}\left(p^{(0)}_{i}+p^{(1)}_{i}-\left|p^{(0)}_{i}-p^{(1)}_{i}\right|\right)\geq \frac{\varepsilon}{4}.
\]
The conclusion then follows \cite[Lemma 5.1]{fei2026testing}, or any of the standard birthday-paradox arguments.
\end{proof}

\begin{lemma}[Binomial distributions]\label{lem:binomial_distributions}
Fix a positive integer $t$. Let $X$ be a random variable following the binomial distribution $\mathrm{Bin}(2t,1/2)$. Let $\nu_{0}$ be the distribution of $X$ conditioned on being even, and let $\nu_{1}$ be the distribution of $X+1$ conditioned on $X$ being odd. Then the total variation distance between $\nu_{0}$ and $\nu_{1}$ equals $2^{-2t}\binom{2t}{t}$, which is no more than $(2t)^{-1/2}$.
\end{lemma}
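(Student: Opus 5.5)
We have to prove Lemma~\ref{lem:binomial_distributions}: compute the total variation distance between $\nu_0$ and $\nu_1$ exactly, then bound the central binomial coefficient.

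\textbf{Step 1: both conditionings have probability $1/2$.} Since $2t\ge 1$, the identity $\sum_k(-1)^k\binom{2t}{k}=0$ gives $\Pr[X\text{ even}]=\Pr[X\text{ odd}]=1/2$. Hence:
\begin{itemize}
\item for even $k$, $\nu_0(k)=2^{1-2t}\binom{2t}{k}$;
\item for even $k$ with $1\le k\le 2t$, $\nu_1(k)=2^{1-2t}\binom{2t}{k-1}$ (the support of $\nu_1$ is the even integers in $[2,2t]$).
\end{itemize}

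\textbf{Step 2: the difference has a sign change at $k=t$.} The total variation distance is
\[
\sum_{k\ \mathrm{even}}\bigl(\nu_0(k)-\nu_1(k)\bigr)^{+}
=2^{1-2t}\sum_{k\ \mathrm{even}}\Bigl(\tbinom{2t}{k}-\tbinom{2t}{k-1}\Bigr)^{+},
\]
with the convention $\binom{2t}{-1}=0$. By unimodality, $\binom{2t}{k}\ge\binom{2t}{k-1}$ exactly when $k\le t$. So the positive part is nonzero only for even $k$ with $0\le k\le t$.

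\textbf{Step 3: evaluate the sum via Pascal's rule.} For $0\le k\le t$ we have $\binom{2t}{k}-\binom{2t}{k-1}=\binom{2t-1}{k}-\binom{2t-1}{k-2}$, since $\binom{2t}{k}=\binom{2t-1}{k}+\binom{2t-1}{k-1}$ and $\binom{2t}{k-1}=\binom{2t-1}{k-1}+\binom{2t-1}{k-2}$. Summing over even $k\le t$ telescopes to $\binom{2t-1}{k^*}$, where $k^*$ is the largest even integer $\le t$. We have $\binom{2t-1}{k^*}=\binom{2t-1}{t}$ when $t$ is even and $\binom{2t-1}{t-1}$ when $t$ is odd; in both cases this equals $\tfrac12\binom{2t}{t}$. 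The total variation distance is therefore $2^{1-2t}\cdot\tfrac12\binom{2t}{t}=2^{-2t}\binom{2t}{t}$. The only real care needed is keeping the parity bookkeeping consistent in this telescoping, which is where I expect the main obstacle.

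\textbf{Step 4: the bound $2^{-2t}\binom{2t}{t}\le (2t)^{-1/2}$.} Set $a_t=2^{-2t}\binom{2t}{t}$. Then $a_{t}/a_{t-1}=\frac{2t-1}{2t}$, which gives
\[
a_t^2=\prod_{s=1}^t\Bigl(\frac{2s-1}{2s}\Bigr)^2\le\prod_{s=1}^t\frac{2s-1}{2s}\cdot\frac{2s}{2s+1}=\frac{1}{2t+1}\le\frac{1}{2t}.
\]
Taking square roots gives $a_t\le (2t)^{-1/2}$, as claimed.
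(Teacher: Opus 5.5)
Your proof is correct, but it sets up the telescoping differently from the paper.

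Write $p_j=2^{-2t}\binom{2t}{j}$. The paper expresses the total variation distance as $p_0+\sum_{i=1}^{t}|p_{2i}-p_{2i-1}|$, which is half of the full $\ell^1$ distance. It splits this at $t$ into the terms $p_{2i}-p_{2i-1}$ with $2i\le t$ and the terms $p_{2i-1}-p_{2i}$ with $2i>t$. It then uses the reflection symmetry $p_j=p_{2t-j}$ to rewrite the second group as the odd-indexed differences $p_{2i+1}-p_{2i}$ below $t$. The two groups together contain every consecutive difference $p_j-p_{j-1}$, $1\le j\le t$, exactly once, so the sum collapses to $p_t$. In effect, the paper averages the positive and negative parts of $\nu_0-\nu_1$ and uses symmetry to interleave them into one complete telescoping sum.

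You instead compute the positive part alone via $\sum_k(\nu_0(k)-\nu_1(k))^{+}$, so the upper half never appears. The price is that only the even-indexed differences remain. You handle this by using Pascal's rule to pass to step-two differences in row $2t-1$, and finish with $\binom{2t-1}{t}=\binom{2t-1}{t-1}=\frac12\binom{2t}{t}$. The two arguments are of comparable length; yours avoids the symmetry of $\mathrm{Bin}(2t,1/2)$ at the cost of the Pascal step.

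Your proof also covers something the paper's does not. The paper stops at the identity and never justifies the bound $2^{-2t}\binom{2t}{t}\le(2t)^{-1/2}$. Your Step~4 supplies it via the product formula $a_t=\prod_{s=1}^{t}\frac{2s-1}{2s}$ and the comparison $\bigl(\frac{2s-1}{2s}\bigr)^{2}\le\frac{2s-1}{2s+1}$.
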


\begin{proof}
For each $i\in \{0,1,\dots,2t\}$, write
\[
    p_i:=\Pr{X=i}=2^{-2t}\binom{2t}{i}
\]
The total variation distance between $\nu_{0}$ and $\nu_{1}$ equals
\begin{align*}
\frac{1}{2}\sum_{i=0}^{t}\big|\nu_{0}(\{2i\})-\nu_{1}(\{2i\})\big|&=\frac{1}{2}\cdot2p_{0}+\frac{1}{2}\sum_{i=1}^{t}|2p_{2i}-2p_{2i-1}|\\
&=p_{0}+\sum_{i=1}^{\lfloor t/2\rfloor}(p_{2i}-p_{2i-1})+\sum_{i=0}^{\lceil t/2\rceil-1}(p_{2t-2i-1}-p_{2t-2i})\\
&=p_{0}+\sum_{i=1}^{\lfloor t/2\rfloor}(p_{2i}-p_{2i-1})+\sum_{i=0}^{\lceil t/2\rceil-1}(p_{2i+1}-p_{2i})\\
&=p_{t}=2^{-2t}\binom{2t}{t}.\qedhere
\end{align*}
\end{proof}

\begin{lemma}[Norm inequalities]\label{lem:norm_ineq}
Let $\mu$ be a probability distribution over $[n]$, and let $f,g\in L^{2}(\mu)$ be two nonnegative functions. Then the following statements hold.
\begin{enumerate}[label=(\arabic*)]
\item We have \(\left\|\sqrt{f}-\sqrt{g}\right\|_{2}\leq \sqrt{\|f-g\|_{1}}\).
\item If $f$ and $g$ are probability density functions, then \( \inp{\sqrt{f}}{\sqrt{g}}\leq \sqrt{2-\|f-g\|_{1}}\),
\end{enumerate}
\end{lemma}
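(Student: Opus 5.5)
Both parts reduce to elementary pointwise inequalities, which are then averaged against $\mu$.

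\emph{Part (1).} The plan is to compare $(\sqrt{f}-\sqrt{g})^2$ with $|f-g|$ at each point. For nonnegative reals $a,b$ we have
\[
(\sqrt a-\sqrt b)^2=|\sqrt a-\sqrt b|\cdot|\sqrt a-\sqrt b|\le |\sqrt a-\sqrt b|\cdot(\sqrt a+\sqrt b)=|a-b|.
\]
Applying this with $a=f(i)$ and $b=g(i)$ and taking expectation over $i\sim\mu$ gives $\|\sqrt f-\sqrt g\|_2^2\le\|f-g\|_1$. Taking square roots finishes the proof.

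\emph{Part (2).} The plan is to write $\inp{\sqrt f}{\sqrt g}$ as an expectation of a geometric mean of $\max$ and $\min$, and then use Cauchy--Schwarz. Pointwise,
\[
\sqrt{f(i)g(i)}=\sqrt{\max\{f(i),g(i)\}\cdot\min\{f(i),g(i)\}}.
\]
By Cauchy--Schwarz in $L^2(\mu)$,
\[
\inp{\sqrt f}{\sqrt g}\le \|\max\{f,g\}\|_1^{1/2}\,\|\min\{f,g\}\|_1^{1/2}.
\]
Since $f$ and $g$ are probability density functions, the identities $\max=\frac{1}{2}(f+g+|f-g|)$ and $\min=\frac{1}{2}(f+g-|f-g|)$ give
\[
\|\max\{f,g\}\|_1=1+\tfrac12\|f-g\|_1\le 2,\qquad \|\min\{f,g\}\|_1=1-\tfrac12\|f-g\|_1 .
\]
Therefore
\[
\inp{\sqrt f}{\sqrt g}\le\sqrt{2\bigl(1-\tfrac12\|f-g\|_1\bigr)}=\sqrt{2-\|f-g\|_1}.
\]
One could also keep the exact product $\sqrt{(1+\tfrac12\|f-g\|_1)(1-\tfrac12\|f-g\|_1)}$, but the stated weaker bound is all that \Cref{cor:Q_vs_PR} needs.

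I do not expect any step to be a real obstacle. The only choice to make is in (2): splitting into $\max$ and $\min$ before applying Cauchy--Schwarz, rather than applying Cauchy--Schwarz to $\sqrt f$ and $\sqrt g$ directly, which would only give the trivial bound $1$.
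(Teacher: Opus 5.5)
Your proposal is correct and matches the paper's proof essentially step for step. For (1) you use the same pointwise bound $(\sqrt{a}-\sqrt{b})^2\le|a-b|$. For (2) you apply the same Cauchy--Schwarz split of $\sqrt{fg}$ into $\sqrt{\max\{f,g\}}\cdot\sqrt{\min\{f,g\}}$, with $\|\min\{f,g\}\|_1=1-\frac{1}{2}\|f-g\|_1$ and $\|\max\{f,g\}\|_1\le 2$.
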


\begin{proof}
The first item follows from pointwise comparison:
\[
\left\|\sqrt{f}-\sqrt{g}\right\|_{2}^{2}=\Exu{i\sim \mu}{\left(\sqrt{f(i)}-\sqrt{g(i)}\right)^{2}}\leq \Exu{i\sim \mu}{\left|f(i)-g(i)\right|\big.}=\|f-g\|_{1}.
\]
For the second item, by Cauchy-Schwarz we have
\begin{align*}
\inp{\sqrt{f}}{\sqrt{g}}^{2}&=\Exu{i\sim \mu}{\sqrt{f(i)g(i)}}^{2}\leq \Exu{i\sim \mu}{\min\{f(i),g(i)\}\big.}\cdot \Exu{i\sim \mu}{\max\{f(i),g(i)\}\big.}\\
&=\Exu{i\sim \mu}{\frac{f(i)+g(i)-|f(i)-g(i)|}{2}\big.}\cdot \Exu{i\sim \mu}{\max\{f(i),g(i)\}\big.}\\
&\leq \left(1-\frac{1}{2}\|f-g\|_{1}\right)\cdot \|f+g\|_{1}\leq 2-\|f-g\|_{1}.\qedhere
\end{align*}
\end{proof}

\begin{lemma}[Information inequality]\label{lem:two_point_ineq}
Let $\varphi:[0,+\infty)\rightarrow \bR$ be the function $\varphi(x)=x\log_{2}x$. Then for any nonnegative numbers $a,b$ we have
\begin{equation}\label{eq:two_point_ineq}
\varphi(a)+\varphi(b)-2\,\varphi\left(\frac{a+b}{2}\right)\geq a+b-2\sqrt{ab}.
\end{equation}
\end{lemma}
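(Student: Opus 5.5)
The plan is to reduce \eqref{eq:two_point_ineq} to an inequality about the binary entropy function, and then prove that inequality with one application of Jensen's inequality.

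\textbf{Normalization.} Both sides of \eqref{eq:two_point_ineq} are positively $1$-homogeneous in $(a,b)$. Indeed, $\varphi(\lambda x)=\lambda\varphi(x)+\lambda x\log_2\lambda$, and the extra linear terms cancel on the left-hand side because $a+b-2\cdot\frac{a+b}{2}=0$. The case $a=b=0$ is trivial, so we may assume $a+b=2$.

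\textbf{Reduction to binary entropy.} Write $a=2p$ and $b=2q$ with $q=1-p$. Then the left-hand side equals
\[
2p\log_2(2p)+2q\log_2(2q)=2-2H(p),\qquad H(p)=p\log_2\frac{1}{p}+q\log_2\frac{1}{q},
\]
and the right-hand side equals $2-4\sqrt{pq}$. So \eqref{eq:two_point_ineq} is equivalent to
\[
H(p)\le 2\sqrt{pq}\qquad\text{for all } p\in[0,1].
\]
This is the step I expect to need care. A naive comparison of Taylor coefficients in $x=p-q$ fails already at the $x^4$ term, and equality holds at both $p=1/2$ and $p\in\{0,1\}$.

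\textbf{Proving $H(p)\le 2\sqrt{pq}$.} Write
\[
H(p)=2\left(p\log_2 p^{-1/2}+q\log_2 q^{-1/2}\right).
\]
By concavity of $\log_2$ (Jensen's inequality with weights $p,q$),
\[
H(p)\le 2\log_2\!\left(\sqrt p+\sqrt q\right)=\log_2\!\left(1+2\sqrt{pq}\right).
\]
Set $y=2\sqrt{pq}\in[0,1]$. It remains to show $\log_2(1+y)\le y$ on $[0,1]$. This holds because $y\mapsto\log_2(1+y)$ is concave and agrees with $y$ at the endpoints $y=0$ and $y=1$.

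The boundary cases $p\in\{0,1\}$ follow with the convention $0\log_2 0=0$, or by continuity.
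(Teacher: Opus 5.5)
Your normalization and the reduction to $H(p)\le 2\sqrt{pq}$ are correct; this is the same reduction the paper makes, with $p=\frac{1+x}{2}$. Your Jensen step $H(p)\le\log_2(1+2\sqrt{pq})$ is also valid. The argument breaks at the last step. A concave function lies \emph{above} its chords, so concavity of $y\mapsto\log_2(1+y)$ together with agreement at $y=0,1$ gives $\log_2(1+y)\ge y$ on $[0,1]$. That is the opposite of what you need; for example, $\log_2\frac32\approx 0.585>\frac12$.

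This is not a slip you can patch. For every $p\notin\{0,\frac12,1\}$, your intermediate bound $\log_2(1+2\sqrt{pq})$ is strictly larger than the target $2\sqrt{pq}$, so this Jensen route can never give the claim. Concretely, at $p=0.1$ you have $H(p)\approx 0.469$ and $2\sqrt{pq}=0.6$, but $\log_2 1.6\approx 0.678$.

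What is missing is a sharper argument for $H(p)\le 2\sqrt{pq}$ that is tight at all three equality points. As you observed, those equality points also defeat termwise Taylor comparison. The paper uses a monotone L'H\^opital rule:
\begin{itemize}
\item Set $x=p-q\in(0,1)$, and define $N(x)=\frac12(1+x)\ln(1+x)+\frac12(1-x)\ln(1-x)=\ln 2\,(1-H(p))$ and $D(x)=1-\sqrt{1-x^2}=1-2\sqrt{pq}$.
\item $N$ and $D$ both vanish at $0$, as do $N'(x)=\frac12\ln\frac{1+x}{1-x}$ and $D'(x)=x(1-x^2)^{-1/2}$.
\item The ratio $N''(x)/D''(x)=(1-x^2)^{-1}/(1-x^2)^{-3/2}=\sqrt{1-x^2}$ is nonincreasing.
\item Apply the rule ``if $f_1/f_2$ is nonincreasing on $(0,1)$ then so is $\int_0^x f_1/\int_0^x f_2$'' twice. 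This shows $N/D$ is nonincreasing on $(0,1)$.
\item Hence $N/D\ge\lim_{x\to 1}N(x)/D(x)=\ln 2$, which is exactly $H(p)\le 2\sqrt{pq}$.
\end{itemize}
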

\begin{proof}
Without loss of generality assume $a\geq b$. We re-parameterize by letting $a=\lambda (1+x)$ and $b=\lambda(1-x)$, where $\lambda\in [0,+\infty)$ and $x\in [0,1]$. For fixed $x$, both sides of \eqref{eq:two_point_ineq} scales linearly in $\lambda$, so we may assume $\lambda=\frac{1}{2}$. The desired inequality \eqref{eq:two_point_ineq} thus reduces to
\begin{equation}\label{eq:two_point_ineq_reduces}
\varphi\left(\frac{1+x}{2}\right)+\varphi\left(\frac{1-x}{2}\right)+1\geq 1-\sqrt{1-x^{2}},\qquad\text{for all }x\in [0,1].
\end{equation}

We first apply~\Cref{lem:basic_calculus} to get that the function
\[
x\mapsto \frac{\int_{0}^{x}(1-t^{2})^{-1}\d t}{\int_{0}^{x}(1-t^{2})^{-3/2}\d t}=\frac{\frac{1}{2}\ln\frac{1+x}{1-x}}{x(1-x^{2})^{-1/2}}
\]
is nonincreasing in $(0,1)$. Applying~\Cref{lem:basic_calculus} again yields that the function
\[
x\mapsto \frac{\int_{0}^{x}\frac{1}{2}\ln\frac{1+t}{1-t}\d t}{\int_{0}^{x}t(1-t^{2})^{-1/2}\d t}=\frac{\frac{1}{2}(1+x)\ln (1+x)+\frac{1}{2}(1-x)\ln (1-x)}{1-\sqrt{1-x^{2}}}
\]
is nonincreasing in $(0,1)$. This function converges to $\ln 2$ as $x$ goes to 1. Therefore, for any $x\in (0,1)$ we have
\[
\frac{\frac{1}{2}(1+x)\ln (1+x)+\frac{1}{2}(1-x)\ln (1-x)}{1-\sqrt{1-x^{2}}}\geq \ln 2,
\]
which rearranges to
\[
\frac{1+x}{2}\log_{2}\frac{1+x}{2}+\frac{1-x}{2}\log_{2}\frac{1-x}{2}+1\geq 1-\sqrt{1-x^{2}}.
\]
Since both sides of the above inequality are continuous functions on $[0,1]$, this proves \eqref{eq:two_point_ineq_reduces}.
\end{proof}

\begin{lemma}[Basic Calculus]\label{lem:basic_calculus}
Suppose $f_{1},f_{2}:(0,1)\rightarrow(0,+\infty)$ are continuous functions such that $f_{1}(x)/f_{2}(x)$ is nonincreasing in $x$. Then the function $\left(\int_{0}^{x}f_{1}(t)\d t\right)/\left(\int_{0}^{x}f_{2}(t)\d t\right)$ is also nonincreasing in $x$.
\end{lemma}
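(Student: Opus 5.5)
The plan is to prove the statement by differentiating the quotient directly. Write $F_{1}(x)=\int_{0}^{x}f_{1}(t)\d t$ and $F_{2}(x)=\int_{0}^{x}f_{2}(t)\d t$. Both are positive on $(0,1)$, since $f_{1},f_{2}>0$, and they are $C^{1}$ with $F_{k}'=f_{k}$. A direct computation gives
\[
\left(\frac{F_{1}}{F_{2}}\right)'(x)=\frac{f_{1}(x)F_{2}(x)-F_{1}(x)f_{2}(x)}{F_{2}(x)^{2}},
\]
so it suffices to show that $f_{1}(x)F_{2}(x)\le F_{1}(x)f_{2}(x)$ for every $x\in(0,1)$. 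A function with nonpositive derivative on an interval is nonincreasing, and that finishes the argument.

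The key step is to fix $x$ and set $c=f_{1}(x)/f_{2}(x)$. Because $f_{1}/f_{2}$ is nonincreasing, every $t\in(0,x)$ satisfies $f_{1}(t)/f_{2}(t)\ge c$. Multiplying by $f_{2}(t)>0$ gives $f_{1}(t)\ge c\,f_{2}(t)$. Integrating over $(0,x)$ then yields $F_{1}(x)\ge c\,F_{2}(x)$, which is exactly $F_{1}(x)f_{2}(x)\ge f_{1}(x)F_{2}(x)$.

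The only point needing care is that the integrals are improper at $0$, because $f_{k}$ is only assumed continuous on the open interval $(0,1)$. Implicitly the statement assumes $F_{1},F_{2}$ are finite; this holds in both applications in the proof of Lemma~\ref{lem:two_point_ineq}, where the integrands are bounded near $0$. Under that assumption, $F_{k}$ is differentiable on $(0,1)$ with derivative $f_{k}$ by the fundamental theorem of calculus applied to $\int_{a}^{x}$ for any fixed $a\in(0,x)$. The pointwise inequality also integrates fine as an improper integral. I therefore expect no real obstacle beyond stating this finiteness convention.
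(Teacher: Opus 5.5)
Your proof is correct. It rests on the same pointwise comparison as the paper's proof, namely $f_1(t)\ge c\,f_2(t)$ for $t<x$ with $c=f_1(x)/f_2(x)$, but it finishes differently.

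The paper avoids derivatives altogether. For $x<y$ it splits $\int_0^y f_i=a_i+b_i$ with $a_i=\int_0^x f_i$ and $b_i=\int_x^y f_i$. It then uses the monotonicity of $f_1/f_2$ on both sides of $x$ to get $a_1/a_2\ge c\ge b_1/b_2$, and concludes with the mediant inequality $(a_1+b_1)/(a_2+b_2)\le a_1/a_2$. The paper's text actually writes these inequalities with the directions reversed, which as written would prove the function nondecreasing. Your direction, $F_1(x)\ge c\,F_2(x)$, is the correct one.

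What each route buys:
\begin{itemize}
\item The mediant argument compares two values of the quotient directly. It needs neither the fundamental theorem of calculus nor the mean value theorem, and it works for merely integrable positive $f_i$.
\item Your argument uses continuity to differentiate and then only needs the comparison on $(0,x)$. The sign of the derivative does the work that the $(x,y)$-comparison plus the mediant inequality does in the paper.
\end{itemize}

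Your remark about finiteness of the improper integrals is a fair point that the paper leaves implicit. For example, $f_1(t)=1/t$ and $f_2\equiv 1$ satisfy the hypotheses but give $F_1\equiv\infty$. You also correctly check that finiteness holds in both applications in the proof of Lemma~\ref{lem:two_point_ineq}.
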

\begin{proof}
Fix $x,y\in (0,1)$ such that $x<y$, and let
\[
a_{i}=\int_{0}^{x}f_{i}(t)\d t,\qquad b_{i}=\int_{x}^{y}f_{i}(t)\d t,\qquad\text{for each }i\in \{1,2\}.
\]
Since $f_{1}/f_{2}$ is nonincreasing, it is easy to see that $a_{1}/a_{2}\leq f_{1}(x)/f_{2}(x)\leq b_{1}/b_{2}$. Therefore
\[
\frac{a_{1}}{a_{2}}\leq \frac{a_{1}+b_{1}}{a_{2}+b_{2}},
\]
as desired.
\end{proof}
\end{document}

%% file: macros.tex
\newtheorem{theorem}{Theorem}[section]
\newtheorem{lemma}[theorem]{Lemma}
\newtheorem{corollary}[theorem]{Corollary}

\theoremstyle{definition}
\newtheorem{remark}{Remark}
\newtheorem{proposition}[theorem]{Proposition}
\newtheorem{definition}[theorem]{Definition}
\newtheorem{problem}[theorem]{Problem}

\renewcommand{\leq}{\leqslant}
\renewcommand{\le}{\leqslant}
\renewcommand{\geq}{\geqslant}
\renewcommand{\ge}{\geqslant}

\newcommand{\cP}{\ensuremath{\mathcal{P}}}

\newcommand{\bE}{\ensuremath{\mathbb{E}}}

\newcommand{\bP}{\ensuremath{\mathbb{P}}}
\newcommand{\bR}{\ensuremath{\mathbb{R}}}

\newcommand{\bZ}{\ensuremath{\mathbb{Z}}}

\newcommand{\Exu}[2]{\underset{#1} \bE \left[ #2 \right] }

\renewcommand{\Pr}[1]{\bP \left[ #1 \right]} 

\newcommand{\Pru}[2]{\underset{ #1 }\bP \left[ #2 \right]}

\renewcommand{\d}{\,\mathrm{d}}
\newcommand{\ind}[1]{\mathds{1} \left[ #1 \right] }

\newcommand{\inp}[2]{\left\langle #1, #2 \right\rangle}

%% file: reference.bib
@article{aaronson2025property,
  title={Property Testing in Bounded Degree Hypergraphs},
  author={Aaronson, Hugo and Carenini, Gaia and Chanda, Atreyi},
  journal={arXiv preprint arXiv:2502.18382},
  year={2025}
}

@inproceedings{adriaens2023testing,
  title={Testing cluster properties of signed graphs},
  author={Adriaens, Florian and Apers, Simon},
  booktitle={Proceedings of the ACM Web Conference 2023},
  pages={49--59},
  year={2023}
}

@article{BatuFRSW13,
  author       = {Tugkan Batu and
                  Lance Fortnow and
                  Ronitt Rubinfeld and
                  Warren D. Smith and
                  Patrick White},
  title        = {Testing Closeness of Discrete Distributions},
  journal      = {J. {ACM}},
  volume       = {60},
  number       = {1},
  pages        = {4:1--4:25},
  year         = {2013},
  url          = {https://doi.org/10.1145/2432622.2432626},
  doi          = {10.1145/2432622.2432626},
  bibsource    = {dblp computer science bibliography, https://dblp.org}
}

@inproceedings{bogdanov2002lower,
  title={A lower bound for testing 3-colorability in bounded-degree graphs},
  author={Bogdanov, Andrej and Obata, Kenji and Trevisan, Luca},
  booktitle={The 43rd Annual IEEE Symposium on Foundations of Computer Science, 2002. Proceedings.},
  pages={93--102},
  year={2002},
  organization={IEEE}
}

@article{charikar2009near,
  title={Near-optimal algorithms for maximum constraint satisfaction problems},
  author={Charikar, Moses and Makarychev, Konstantin and Makarychev, Yury},
  journal={ACM Transactions on Algorithms (TALG)},
  volume={5},
  number={3},
  pages={1--14},
  year={2009},
  publisher={ACM New York, NY, USA}
}

@inproceedings{chiplunkar2018testing,
  title={Testing graph clusterability: Algorithms and lower bounds},
  author={Chiplunkar, Ashish and Kapralov, Michael and Khanna, Sanjeev and Mousavifar, Aida and Peres, Yuval},
  booktitle={2018 IEEE 59th Annual Symposium on Foundations of Computer Science (FOCS)},
  pages={497--508},
  year={2018},
  organization={IEEE}
}

@article{czumaj2010testing,
  title={Testing expansion in bounded-degree graphs},
  author={Czumaj, Artur and Sohler, Christian},
  journal={Combinatorics, Probability and Computing},
  volume={19},
  number={5-6},
  pages={693--709},
  year={2010},
  publisher={Cambridge University Press}
}

@article{czumaj2014finding,
  title={Finding cycles and trees in sublinear time},
  author={Czumaj, Artur and Goldreich, Oded and Ron, Dana and Seshadhri, C and Shapira, Asaf and Sohler, Christian},
  journal={Random Structures \& Algorithms},
  volume={45},
  number={2},
  pages={139--184},
  year={2014},
  publisher={Wiley Online Library}
}

@inproceedings{czumaj2015testing,
  title={Testing cluster structure of graphs},
  author={Czumaj, Artur and Peng, Pan and Sohler, Christian},
  booktitle={Proceedings of the forty-seventh annual ACM symposium on Theory of Computing},
  pages={723--732},
  year={2015}
}

@inproceedings{eden2018sampling,
  title={On Sampling Edges Almost Uniformly},
  author={Eden, Talya and Rosenbaum, Will},
  booktitle={1st Symposium on Simplicity in Algorithms},
  year={2018}
}

@article{feder1998computational,
  title={The computational structure of monotone monadic SNP and constraint satisfaction: A study through Datalog and group theory},
  author={Feder, Tom{\'a}s and Vardi, Moshe Y},
  journal={SIAM Journal on Computing},
  volume={28},
  number={1},
  pages={57--104},
  year={1998},
  publisher={SIAM}
}

@article{fei2025dichotomy,
  title={A dichotomy theorem for multi-pass streaming {CSPs}},
  author={Fei, Yumou and Minzer, Dor and Wang, Shuo},
  journal={arXiv preprint arXiv:2509.11399},
  year={2025}
}

@article{fei2025unbounded,
  title={Unbounded-width {CSPs} are Untestable in a Sublinear Number of Queries},
  author={Fei, Yumou},
  journal={arXiv preprint arXiv:2510.27012},
  year={2025}
}

@article{fei2026near,
  title={Near-Optimal Space Lower Bounds for Streaming {CSPs}},
  author={Fei, Yumou and Minzer, Dor and Wang, Shuo},
  journal={arXiv preprint arXiv:2604.01400},
  year={2026}
}

@article{fei2026testing,
  title={Testing Properties of Edge Distributions},
  author={Fei, Yumou},
  journal={arXiv preprint arXiv:2603.22702},
  year={2026}
}

@article{goemans1995improved,
  title={Improved approximation algorithms for maximum cut and satisfiability problems using semidefinite programming},
  author={Goemans, Michel X and Williamson, David P},
  journal={Journal of the ACM (JACM)},
  volume={42},
  number={6},
  pages={1115--1145},
  year={1995},
  publisher={ACM New York, NY, USA}
}

@article{goldreich1999sublinear,
  title={A Sublinear Bipartiteness Tester for Bounded Degree Graphs},
  author={Goldreich, Oded and Ron, Dana},
  journal={Combinatorica},
  volume={19},
  number={3},
  pages={335--373},
  year={1999},
  publisher={Springer}
}

@article{goldreich2002property,
  title={Property Testing in Bounded Degree Graphs},
  author={Goldreich, Oded and Ron, Dana},
  journal={Algorithmica},
  volume={32},
  number={2},
  pages={302--343},
  year={2002},
  publisher={Springer}
}

@article{goldreich2011testing,
  title={On Testing Expansion in Bounded-Degree Graphs},
  author={Goldreich, Oded and Ron, Dana},
  journal={Studies in Complexity and Cryptography},
  pages={68},
  year={2011},
  publisher={Springer}
}

@inproceedings{jha2024sublinear,
  title={A sublinear time tester for Max-Cut on clusterable graphs},
  author={Jha, Agastya Vibhuti and Kumar, Akash},
  booktitle={51st International Colloquium on Automata, Languages, and Programming (ICALP 2024)},
  pages={91--1},
  year={2024},
  organization={Schloss Dagstuhl--Leibniz-Zentrum f{\"u}r Informatik}
}

@article{kale2011expansion,
  title={An expansion tester for bounded degree graphs},
  author={Kale, Satyen and Seshadhri, Comandur},
  journal={SIAM Journal on Computing},
  volume={40},
  number={3},
  pages={709--720},
  year={2011},
  publisher={SIAM}
}

@article{kaufman2004tight,
  title={Tight bounds for testing bipartiteness in general graphs},
  author={Kaufman, Tali and Krivelevich, Michael and Ron, Dana},
  journal={SIAM Journal on computing},
  volume={33},
  number={6},
  pages={1441--1483},
  year={2004},
  publisher={SIAM}
}

@article{khot2007optimal,
  title={Optimal inapproximability results for {MAX-CUT} and other 2-variable {CSPs}?},
  author={Khot, Subhash and Kindler, Guy and Mossel, Elchanan and O’Donnell, Ryan},
  journal={SIAM Journal on Computing},
  volume={37},
  number={1},
  pages={319--357},
  year={2007},
  publisher={SIAM}
}

@article{kumar2020random,
  title={Random Walks and Forbidden Minors {I}: An $n^{1/2+o(1)}$-Query One-Sided Tester for Minor Closed Properties on Bounded Degree Graphs},
  author={Kumar, Akash and Seshadhri, C and Stolman, Andrew},
  journal={SIAM Journal on Computing},
  volume={52},
  number={6},
  pages={FOCS18--216},
  year={2020},
  publisher={SIAM}
}

@article{nachmias2010testing,
  title={Testing the expansion of a graph},
  author={Nachmias, Asaf and Shapira, Asaf},
  journal={Information and Computation},
  volume={208},
  number={4},
  pages={309--314},
  year={2010},
  publisher={Elsevier}
}

@inproceedings{o2008optimal,
  title={An optimal {SDP} algorithm for Max-Cut, and equally optimal Long Code tests},
  author={{O'Donnell}, Ryan and Wu, Yi},
  booktitle={Proceedings of the fortieth annual ACM symposium on theory of computing},
  pages={335--344},
  year={2008}
}

@inproceedings{peng2023sublinear,
  title={Sublinear-time algorithms for {Max Cut}, {Max E2Lin($q$)}, and {Unique Label Cover} on expanders},
  author={Peng, Pan and Yoshida, Yuichi},
  booktitle={Proceedings of the 2023 Annual ACM-SIAM Symposium on Discrete Algorithms (SODA)},
  pages={4936--4965},
  year={2023},
  organization={SIAM}
}

@article{yoshida2010query,
  title={Query-number preserving reductions and linear lower bounds for testing},
  author={Yoshida, Yuichi and Ito, Hiro},
  journal={IEICE TRANSACTIONS on Information and Systems},
  volume={93},
  number={2},
  pages={233--240},
  year={2010},
  publisher={The Institute of Electronics, Information and Communication Engineers}
}

@inproceedings{yoshida2011lower,
  title={Lower bounds on query complexity for testing bounded-degree {CSPs}},
  author={Yoshida, Yuichi},
  booktitle={2011 IEEE 26th Annual Conference on Computational Complexity},
  pages={34--44},
  year={2011},
  organization={IEEE}
}

@inproceedings{yoshida2011optimal,
  title={Optimal constant-time approximation algorithms and (unconditional) inapproximability results for every bounded-degree {CSP}},
  author={Yoshida, Yuichi},
  booktitle={Proceedings of the forty-third annual ACM symposium on Theory of computing},
  pages={665--674},
  year={2011}
}
